\begin{document}

\newcommand{\bE}{\ensuremath{\mathbf{E}}}
\newtheorem{theorem}{Theorem}[section]
\newtheorem{proposition}[theorem]{Proposition}
\newtheorem{lemma}[theorem]{Lemma}
\newtheorem{corollary}[theorem]{Corollary}
\newtheorem{Definition}[theorem]{Definition}

\title{Lopsidependency in the Moser-Tardos framework: \\ Beyond the Lopsided Lov\'{a}sz Local Lemma}
\author{David G. Harris\thanks{Department of Computer Science, University of Maryland, 
College Park, MD 20742. 
Research supported in part by NSF Awards CNS 1010789 and CCF 1422569.
Email: \texttt{davidgharris29@gmail.com}}}

\date{}

\maketitle

\begin{abstract}

\small\baselineskip=9pt The Lopsided Lov\'{a}sz Local Lemma (LLLL) is a powerful probabilistic principle which has been used in a variety of combinatorial constructions. While this principle began as a general statement about probability spaces, it has recently been transformed into a variety of polynomial-time algorithms. The resampling algorithm of Moser \& Tardos (2010) is the most well-known example of this. A variety of criteria have been shown for the LLLL; the strongest possible criterion was shown by Shearer, and other criteria which are easier to use computationally have been shown by Bissacot et al (2011), Pegden (2014), Kolipaka \& Szegedy (2011), and Kolipaka, Szegedy, Xu (2012).

\baselineskip=9pt  We show a new criterion for the Moser-Tardos algorithm to converge. This criterion is stronger than the LLLL criterion, and in fact can yield better results even than the full Shearer criterion. This is possible because it does not apply in the same generality as the original LLLL; yet, it is strong enough to cover many applications of the LLLL in combinatorics. We show a variety of new bounds and algorithms. A noteworthy application is for $k$-SAT, with bounded occurrences of variables. As shown in Gebauer, Sz\'{a}bo, and Tardos (2011), a $k$-SAT instance in which every variable appears $L \leq \frac{2^{k+1}}{e (k+1)}$ times, is satisfiable. Although this bound is asymptotically tight (in $k$), we improve it to $L \leq \frac{2^{k+1} (1 - 1/k)^k}{k-1} - \frac{2}{k}$ which can be significantly stronger when $k$ is small.

\baselineskip=9pt We introduce a new parallel algorithm for the LLLL. While Moser \& Tardos described a simple parallel algorithm for the Lov\'{a}sz Local Lemma, and described a simple sequential algorithm for a form of the Lopsided Lemma, they were not able to combine the two. Our new algorithm applies in nearly all settings in which the sequential algorithm works --- this includes settings covered by our new stronger LLLL criterion.

\baselineskip=9pt
\end{abstract}

\maketitle

\section{Introduction}
We begin by reviewing background material on the Lov\'{a}sz Local Lemma (LLL), the Resampling Algorithm of Moser \& Tardos to implement it \cite{moser-tardos}, and its generalization the Lopsided Lov\'{a}sz Local Lemma (LLLL). We discuss some strengthened forms of the LLL, such as Shearer's criterion \cite{shearer}. This will set notation which we will use throughout the paper. In Section~\ref{new-contrib-sec}, we will describe the main contribution of this paper, which is a strengthened form of the LLLL.

This is an extended version of a paper which appeared in the Proceedings of the Twenty-sixth annual ACM-SIAM Symposium on Discrete Algorithms.

\subsection{The Lov\'{a}sz Local Lemma and the Moser-Tardos algorithm}
The Lov\'{a}sz Local Lemma (LLL) is a very general probabilistic principle, first introduced in \cite{lll-orig}, for showing that it is possible to avoid a potentially large set $\mathcal B$ of ``bad-events'', as long as the bad-events are not interdependent and are not too likely. We write $m = | \mathcal B |$. One formulation of this principle is the following: Suppose we have a probability space $\Omega$, and we define a \emph{dependency graph} among all the bad-events, such that any bad-event $B \in \mathcal B$ is independent of all the other bad-events except its neighbors in the dependency graph. We use the notation $B \sim B'$ to denote that $B$ and $B'$ are connected in the dependency graph. (The notion of dependency is intuitively clear, although the formal definition is somewhat technical.)

Now suppose there is some weighting function $\mu: \mathcal B \rightarrow [0, \infty)$, with the property
\begin{equation}
\label{lll-eqn}
\forall B \in \mathcal B \qquad \mu(B) \geq P_{\Omega} (B)  \prod_{B' \sim B} (1 + \mu(B'))
\end{equation}
then, in the probability space $\Omega$, there is a positive probability that \emph{none} of the bad-events $B \in \mathcal B$ are true.\footnote{Note that the standard presentation of the LLL uses the parametrization $x(B) = \frac{\mu(B)}{\mu(B)+1}$, but this alternate parametrization will be necessary for later results.}

The LLL is often seen in the simpler ``symmetric'' form: suppose each bad-event $B$ has probability at most $p$; and suppose each bad-event depends on at most $d$ other bad-events. Then (\ref{lll-eqn}) can be simplified to the criterion $e p (d+1) \leq 1$.

This principle has had many applications in combinatorics, for showing the existence of a wide variety of configurations. Unfortunately, typically the probability of avoiding $\mathcal B$ is exponentially small, so this does not yield efficient algorithms. In \cite{moser-tardos}, Moser \& Tardos developed an amazingly simple efficient algorithm for the LLL, as follows: suppose we have a series of variables $X_1, \dots, X_n$; we wish to assign values to these variables. The probability space $\Omega$ assigns each variable \emph{independently}, with $P_{\Omega} (X_i = j) = p_{ij}$.  We also have a set $\mathcal B$ of forbidden configurations of these variables, which we refer to as \emph{bad-events}. For our purposes, it will suffice to consider bad-events which are \emph{atomic}; that is, any bad-event $B$ can be written $B \equiv (X_{i_1} = j_1) \wedge \dots \wedge (X_{i_r} = j_r)$. 
We abuse notation, so that $B$ is identified with the set $\{ (i_1, j_1), \dots, (i_r, j_r) \}$. Thus, for instance, when we write $(i,j) \in B$, we mean that $B$ demands $X_i = j$. 

The Moser-Tardos algorithm (henceforth referred to as MT) can now be described as follows:
\begin{enumerate}
\item[1.] Draw each variable independently from the distribution $\Omega$.
\item[2.] While there is some true bad-event:
\begin{enumerate}
\item[2a.] Choose a true bad-event $B$ arbitrarily.
\item[2b.] Resample all the variables involved in $B$ according to the distribution $\Omega$.
\end{enumerate}
\end{enumerate}
Under the same conditions as (\ref{lll-eqn}), the criterion for the probabilistic LLL,  \cite{moser-tardos} showed that this algorithm terminates quickly. 

\subsection{The Lopsided Lov\'{a}sz Local Lemma}
In \cite{erdos-spencer}, the LLL was generalized by observing that it is not necessary for bad-events to be fully independent. If the bad-events are \emph{positively correlated} in a certain sense, then for the purposes of the LLL this is just as good as independence. (The precise form of the positive correlation is somewhat involved, but we will not need it in this paper so this intuitive definition will suffice.)

 If bad-events $B, B'$ are not positively correlated in this sense, we say that $B, B'$ are \emph{lopsidependent}. One can likewise build a \emph{lopsidependency graph} (also known as a \emph{negative dependency graph}) on the set of bad-events $\mathcal B$. In this case, the LLL criterion (only slightly modified) still applies: we must have
$$
\forall B \in \mathcal B \qquad \mu(B) \geq P_{\Omega} (B) \Bigl[ \mu(B) + \prod_{B' \sim B} (1 + \mu(B')) \Bigr]
$$

This generalized form of LLL, referred to as the \emph{Lopsided Lov\'{a}sz Local Lemma} (LLLL), has been used in a variety of contexts. A variety of probability spaces fit into this framework, for example random permutations ~\cite{lu-szekeley}, Hamiltonian cycles ~\cite{albert}, and matchings on the complete graph ~\cite{lu-szekeley2}. Only a few applications of the LLLL have corresponding efficient algorithms; for example, \cite{lllperm} gives an MT variant for random permutations and \cite{achlioptas} gives algorithms for other spaces such as Hamiltonian cycles.

One important and simple setting for the LLLL is covered by the original MT algorithm, and this was already described in the original paper of Moser \& Tardos: suppose as above that $\Omega$ chooses each variable independently and the bad-events are atomic. Given two such bad-events $B, B'$, we say that $B, B'$ \emph{agree} on variable $i$ if there is some $j$ with $(i, j) \in B, (i,j) \in B'$. We say that $B, B'$ \emph{disagree} on variable $i$ if there are $j \neq j'$ with $(i, j) \in B, (i, j') \in B'$. Now the relation of disagreeing on some variable defines a lopsidependency graph:
$$
B \sim B' \qquad \text{if $\exists (i,j) \in B, (i, j') \in B', j \neq j'$}
$$

We use the notation $(i,j) \sim (i', j')$ iff $i = i', j \neq j'$. Some related notations will be to write $(i,j) \sim B$ iff there is some $j' \neq j$ with $(i,j') \in B$, and to write $i \sim B$ iff there is some $(i,j) \in B$.  We note that if $B \sim B'$, then $B$ and $B'$ are mutually exclusive events.

We refer to this setting as the ``variable-assignment LLLL,'' since we are independently assigning values to each variable. The $k$-SAT problem is a canonical example; we discuss how the LLLL applies in Section~\ref{sat-section}. In this problem, we are given a collection of clauses in $n$ variables, each involving $k$ distinct literals. Each variables appears in at most $L$ clauses (either positively or negatively). Our goal is to find a solution which makes all the clauses true. It turns out the worst case is when each variable appears $L/2$ times positively and $L/2$ times negatively (see Section~\ref{sat-section} for more details). In this case, we assign each variable to be true or false with probability $1/2$. For each clause, we have a bad-event that the clause is falsified. Now consider a bad-event $B$; it has probability $2^{-k}$. In the LLLL setting, $B$ depends only on clauses which \emph{disagree} with the variables in that clause; as each clause appears $L/2$ times with each polarity, there are $k L/2$ other bad-events which are lopsidependent with $B$. Thus the symmetric LLLL gives the bound $L \leq \frac{2^{k+1} - 2 e}{e k}$
to guarantee that a solution exists (and MT finds it). As we will see in Section~\ref{sat-section}, more careful calculations and more precise forms of the LLL can give slightly better bounds.

\subsection{Shearer's criterion and stronger formulations of the LLL and LLLL}
The LLL and LLLL criteria depend solely on two factors: the probabilities of the bad-events, and the shape of the dependency (or lopsidependency) graph between them. The precise nature of the dependency does not enter into them; for example, we have the same formula whether we are dealing with the variable-assignment setting, or permutations, or matchings, and so on. 

One can ask what is the strongest possible criterion that can be given in terms of these two quantities. Thus, given a dependency graph $G$ and probabilities $P_{\Omega}(B)$, is there guaranteed to be a non-zero probability of avoiding all bad-events? In fact, the exactly tight criterion was shown by Shearer \cite{shearer}, and it is stronger than (\ref{lll-eqn}). Furthermore, \cite{kolipaka} showed that, if the MT algorithm has a small slack compared to this optimal criterion, then it too will converge. 

Unfortunately, the criterion of \cite{shearer} is not easy to compute. A variety of other criteria, which are slightly weaker than \cite{shearer} but easier to apply, have been shown, e.g. \cite{bissacot}, \cite{pegden}, \cite{kolipaka2}. These criteria apply both to the probabilistic form of the LLLL as well as to the MT algorithm. We emphasize that these alternate criteria are all weaker than and are implied by Shearer's criterion. We will discuss them in greater detail in Section~\ref{comparison-sec}.

As noted by \cite{kolipaka}, the MT algorithm applies to a more restrictive model than the LLL, so Shearer's LLL criterion is not necessarily tight in this restrictive class. \cite{kolipaka} gave some toy examples of this situation. Notwithstanding this, most researchers have considered Shearer's criterion to be the ultimate form of the LLL. Other forms of the LLL and algorithms such as MT are attempts to match this bound.

\subsection{Our contributions: a new LLLL criterion}
\label{new-contrib-sec}
In this paper, we do not change the MT algorithm in any way. However, we give a alternate criterion for it to converge. In our opinion, the surprising fact is that this new criterion can go beyond the Shearer criterion. In the LLLL framework, lopsidependency is as good as pure independence; we show that if bad-events agree on a variable, this gives \emph{better bounds} than if they were independent! 

 We reiterate that the Shearer criterion is the strongest possible criterion that can be given \emph{for the level of generality to which it applies}. Our new criterion depends in a fundamental way on the decomposition of bad-events into variables; it cannot be stated in the language of probability and dependency graphs. 

We state our new criterion as follows. 
\begin{Definition}[Orderability] Given an event $E$, we say that a set of bad-events $Y \subseteq \mathcal B$ is \emph{orderable} to $E$, if either of the conditions hold:
\begin{enumerate}
\item[(O1)] $Y = \{ E \}$, or
\item[(O2)] there is some ordering $Y = \{B_1, \dots, B_s \}$, with the following property. For each $i = 1, \dots, s$, there is some $z_i \in E$ such that $z_i \sim B_i,  z_i \not \sim B_1, \dots, z_i \not \sim B_{i-1}$.
\end{enumerate}

Note that $\emptyset$ is orderable to $E$, as indeed it satisfies condition (O2). Recall that when we write $z \sim B$, we mean that there exists $z' \in B$ with $z \sim z'$.
\end{Definition}

\begin{theorem}
\label{var-assignment-thm}
In the variable-assignment setting, suppose there is $\mu: \mathcal B \rightarrow [0, \infty)$ satisfying the following condition:
$$
\forall B \in \mathcal B, \mu(B) \geq P_{\Omega}(B) \sum_{\substack{\text{$Y$ orderable}\\\text{to $B$}}} \prod_{B' \in Y} \mu(B')
$$
then the MT terminates with probability 1. The expected number of resamplings of a bad-event is at most $\mu(B)$. 
\end{theorem}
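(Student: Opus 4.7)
The plan is to adapt the classical Moser--Tardos witness-tree analysis to the variable-assignment setting, with the twist that the children of each internal node are allowed to form a set that is \emph{orderable}---rather than independent---with respect to the parent's event. This is the new ingredient that makes the criterion go beyond Shearer, and the proof hinges on building witness trees flexible enough to exploit agreement between bad-events.

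First I would define a witness tree $\tau$ as a finite rooted tree whose nodes carry labels $B_v \in \mathcal B$ and, for each non-root node $v$, an anchor $z_v \in B_{\mathrm{parent}(v)}$ with $z_v \sim B_v$, subject to the constraint that for every internal node $u$ the set of children is orderable to $B_u$, witnessed by the anchors in some linear order of the children. I would then construct a witness tree from an MT execution log: for a given resampling of $B$ at time $t$, initialize $\tau$ with root $B$, then scan the log backwards from time $t-1$ down to $1$; for each prior event $B'$, attempt to attach $B'$ to the deepest node $v$ in the current tree at which some $z \in B_v$ satisfies $z \sim B'$ and $z \not\sim B_c$ for every already-attached child $c$ of $v$. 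This attachment rule produces children whose anchors witness orderability in their attachment order, and a standard argument shows the induced map from resamplings to trees is injective.

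Next I would prove the probabilistic validation lemma: for any fixed witness tree $\tau$, the probability that the log produces $\tau$ is at most $\prod_{v \in \tau} P_\Omega(B_v)$. Following Moser--Tardos, each resampling consumes a fresh block of independent samples from $\Omega$; at each node $v$ the block of samples then controlling the variables of $B_v$ must satisfy $B_v$, contributing a factor $P_\Omega(B_v)$. The variable-assignment structure and the anchor-based attachment rule are what allow this argument to survive sibling lopsidependencies: sibling overlaps that arise through \emph{agreement} on variables, rather than disagreement captured through the anchors, do not impose additional probabilistic constraints and therefore cost nothing.

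Finally, letting $\Phi(B)$ denote the sum of $\prod_v P_\Omega(B_v)$ over witness trees rooted at $B$, decomposition at the root yields the recursion
$$
\Phi(B) \;=\; P_\Omega(B) \sum_{\substack{Y \text{ orderable to } B}} \prod_{B' \in Y} \Phi(B'),
$$
and induction on tree depth, combined with the hypothesis on $\mu$, gives $\Phi(B) \leq \mu(B)$. Since the expected number of resamplings of $B$ is bounded by $\Phi(B)$, this establishes both termination with probability $1$ and the quantitative bound. The main obstacle will be making the witness-tree construction honest: showing that the backward-walk attachment rule produces trees whose sibling sets are genuinely orderable with the recorded anchors, that distinct resamplings produce distinct trees, and that the recursion for $\Phi$ sums over \emph{sets} (rather than ordered or anchored sequences) so that no combinatorial factor slips in between the hypothesis and the tree count.
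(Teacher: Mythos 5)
Your overall architecture---compressed witness trees whose sibling sets are orderable, injectivity of the resampling-to-tree map, and a root-decomposition recursion for the tree-weight sum $\Phi(B)$---matches the paper's. But the central probabilistic step, ``each resampling consumes a fresh block of independent samples from $\Omega$; at each node the block then controlling the variables of $B_v$ must satisfy $B_v$,'' is precisely the step that fails for these trees, and the paper goes out of its way to avoid it. The difficulty is that your trees deliberately discard resamplings: if two earlier events $B_1,B_2$ both disagree with the root $B$ on the same variable $i$ (and share nothing else with the tree), only the later one is attached and the earlier one is dropped entirely. Consequently the tree no longer determines, for each node $v$ and each variable $i$ of $B_v$, \emph{which} index of the resampling table supplied the value of $X_i$ at the moment $B_v$ was found true---that index depends on how many unrecorded intermediate resamplings of $i$ occurred, which is a function of the execution (and, for an adversarial resampling rule, of the table itself). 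So the event $\hat\tau=\tau$ is not a conjunction of independent events about fixed table entries, and the product bound does not follow from the Moser--Tardos coupling. The paper instead abandons the resampling table: it restricts the resampling rule to depend only on the prior state (no clairvoyant adversary), so that each freshly resampled value is distributed as $\Omega$ conditional on the entire past, and it proves Lemma~\ref{witness-tree-lemma} by a forward-in-time induction that tracks the ``active value'' $A_i(\tau)$ of each variable and establishes the bound $\prod_{B\in\tau}P_\Omega(B)/\prod_i P_\Omega(X_i=A_i(\tau))$ starting from an arbitrary state. Your proof needs this (or an equivalent conditional/martingale-style argument), together with the explicit hypothesis on the resampling rule, to go through.

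Two smaller points. First, injectivity is not ``standard'' here: later resamplings of the same bad-event can have \emph{smaller} witness trees, so the usual monotonicity argument is unavailable; the paper proves uniqueness by showing the sequence of partial trees $\hat\tau_t$ evolves deterministically from $\hat\tau$ and the log (Propositions~\ref{update-tree-prop} and~\ref{unique-prop}). Second, your anchor-only attachment rule ($z\sim B'$ with $z\not\sim$ previously attached children) omits case (O1) of orderability, i.e.\ $Y=\{B\}$: a repeated resampling of $B$ itself with no intervening disagreement must still be recorded as a child of the root, since $B$ agrees with itself on every variable and so admits no anchor. Without this case the recursion for $\Phi$ would not match the sum over orderable sets in the hypothesis.
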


The LLLL cannot guarantee under these conditions that a satisfactory configuration even exists; for this reason, we view this criterion as going beyond the LLLL. This criterion is about as easy to work with as the original MT criterion --- in some cases, in fact, it can yield significantly simpler calculations. In Section~\ref{comparison-sec}, we compare this to other LLLL criteria.

In Section~\ref{app-section}, we give some applications of this new criterion. We summarize the most important ones here:

\begin{enumerate}

\item{\bf SAT with bounded variable occurrences} Consider the following problem: we have a SAT instance, in which each clause contains $k$ distinct variables. We are also guaranteed that each variable occurs in at most $L$ clauses, either positively or negatively. How can large can $L$ be so as to guarantee the existence of a solution to the SAT instance? 

As shown in \cite{gst}, the LLLL gives an asymptotically tight bound for this problem, namely $L \leq \frac{2^{k+1}}{e (k+1)}$. However, there still is room for improvement, especially when $k$ is small.  As $L$ is growing exponentially, it is arguably the case that large $k$ is not algorithmically relevant anyway. We are able to improve on \cite{gst} to show that when
$$
L \leq \frac{2^{k+1} (1 - 1/k)^k}{k-1} - \frac{2}{k}
$$
then the SAT instance is satisfiable, and the MT algorithm finds a satisfying occurrence in polynomial time. This is always better than the bound of \cite{gst}, and when $k$ is small the improvement can be substantial.

\item {\bf Hypergraph coloring}. Suppose we are given a $k$-uniform hypergraph, in which each vertex participates in at most $L$ edges. We wish to $c$-color the vertices, so that no edge is monochromatic (all vertices receiving the same color). This problem was in fact the inspiration for the original LLL \cite{lll-orig}. There are many types of graphs and parameters for which better bounds are known, but the LLL gives very simple constructions and also provides the strongest bounds in some cases (particularly when $c, k$ are fixed small integers). Strangely, depending on whether $c$ or $k$ is large, one can obtain better bounds using the standard LLL or the LLLL. Thus one can show the bounds:
\begin{equation}
\label{hypergraph-crit}
L \leq \frac{c^k}{k} \max(\frac{(1 - 1/k)^{k-1}}{c} , \frac{1}{(c-1) e} )
\end{equation}

Our approach gives the simpler and stronger criterion:
$$
L \leq \frac{c^k (1-1/k)^{k-1}
   }{k (c-1)}.
$$

Our new criterion is always better than (\ref{hypergraph-crit}), interpolating smoothly between the regimes when $c$ or $k$ is large. This illustrates an advantage of our technique --- despite the daunting form of our new LLLL criterion, in practice it typically gives formulas which are more computationally tractable.

\end{enumerate}

\textbf{Comparison of Shearer criterion to Moser-Tardos.} It is challenging to directly compare the Shearer criterion with the Moser-Tardos algorithm, because they apply to such different contexts: generic probability spaces in the former case and variable configurations in the latter. We defer this discussion to a forthcoming paper \cite{harris3} which shows that the analysis of \cite{gst} (based on the asymmetric LLL) cannot be much improved by using a stronger form of the LLL. In particular, our proof directly based on our new MT criterion is stronger than would be possible from Shearer's criterion, let alone the LLL.

\subsection{Our contribution: a new parallel algorithm}
The original MT framework had a simple parallel RNC algorithm for the Lov\'{a}sz Local Lemma. Frustratingly, although the sequential MT algorithm applied to the variable-assignment LLLL setting, the parallel algorithm did not. 

We remedy this situation in Section~\ref{parallel-mt} by introducing a new parallel randomized algorithm for the variable-assignment LLLL, which requires only a multiplicative slack compared to our new criterion for the sequential algorithm.  (Showing that this algorithm is compatible with the new LLLL criterion requires non-trivial arguments).
\begin{theorem}
Suppose there is $\mu: \mathcal B \rightarrow [0, \infty)$ satisfying the following condition:
$$
\forall B \in \mathcal B, \mu(B) \geq (1 + \epsilon) P_{\Omega}(B) \sum_{\substack{\text{$Y$ orderable}\\\text{to $B$}}} \prod_{B' \in Y} \mu(B')
$$
then our new parallel algorithm algorithm terminates with probability 1. Suppose that the size of each bad-event is at most $M$. Then our parallel algorithm terminates in time $\epsilon^{-1} M (\log \sum_{B \in \mathcal B} \mu(B)) (\log^{O(1)} n) (M + \log^{O(1)} m )$ and $(n m)^{O(1)}$ processors with high probability. (Typically $\sum_{B \in \mathcal B} \mu(B) \leq O(m)$).
\end{theorem}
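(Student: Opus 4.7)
The plan is to extend the parallel Moser-Tardos framework of \cite{moser-tardos} to the new criterion. In each round, the algorithm identifies all currently true bad-events, computes a maximal independent set $I$ in the lopsidependency graph restricted to these true bad-events using Luby's algorithm, and resamples in parallel every variable of every bad-event in $I$. The MIS computation runs in $\log^{O(1)}(nm)$ time on $(nm)^{O(1)}$ processors, and each round additionally requires $O(M)$ work per bad-event to evaluate truth values and perform the resampling; this gives the $(M + \log^{O(1)} m)$ per-round factor in the stated bound.

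To bound the number of rounds, I would track the execution with witness trees in the style of \cite{moser-tardos}, but strengthen the notion of ``proper tree'' to match the analysis behind Theorem~\ref{var-assignment-thm}: the multiset of children of each node labelled $B$ in the witness tree must form a set that is orderable to $B$. A witness tree of depth $d$ certifies that $d$ parallel rounds have elapsed, and the standard coupling argument shows that such a tree appears with probability at most $P_\Omega(B) \prod_v P_\Omega(B_v)$. Summing over proper witness trees rooted at $B$ and applying the hypothesis with slack $1+\epsilon$ yields a total expected weight of proper witness trees of depth at least $t$ that decays geometrically in $t/M$. A Markov argument then bounds the number of rounds by $O(\epsilon^{-1} M \log(\sum_B \mu(B)) \log^{O(1)} n)$ with high probability, matching the factor in the statement.

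The hardest step will be showing that a witness tree extracted from a parallel execution can be made proper in the orderability sense. In the sequential case condition (O2) is naturally aligned with the resampling order, but within a parallel round the MIS $I$ has no intrinsic ordering of its elements. I would introduce a canonical intra-round ordering, for example derived from a fixed priority used by Luby's MIS algorithm, and prove a labeling lemma: given any execution log, each child-set of the resulting witness tree admits an ordering satisfying (O2). The key structural input is that any two $B, B' \in I$ are \emph{not} lopsidependent, so they disagree on no variable; hence any interference between bad-events resampled simultaneously can only come from variable \emph{agreements}, which is exactly what condition (O2) is designed to handle. Establishing this labeling lemma, and then matching it with the $(1+\epsilon)$ slack to absorb the extra factor of $M$ in the depth bound, will constitute the technical heart of the argument.
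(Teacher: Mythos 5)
Your core algorithmic step is the one the paper explicitly identifies as broken in the lopsidependent setting. Taking a maximal independent set $I$ in the \emph{lopsidependency} graph and resampling ``every variable of every bad-event in $I$'' in parallel is ill-defined: two bad-events that are not lopsidependent may still \emph{share} variables (they merely agree on them), so a shared variable would be resampled by several members of $I$ simultaneously with no consistent outcome. This is precisely why the original parallel Moser--Tardos algorithm does not carry over, and waving at a ``canonical intra-round ordering'' does not repair it --- once the surviving value of a shared variable depends on which of several fresh samples you keep, the stochasticity argument underlying Lemma~\ref{witness-tree-lemma} (that each resampling is distributed as $\Omega$ conditional on all prior state, with the resampling choice independent of future randomness) is no longer available for free. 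The paper's algorithm instead selects, within each round, a sequence of sub-rounds of \emph{vertex-capacitated maximal edge packings} (each variable appears at most $C_i = \lceil 1/(M q_i)\rceil$ times), draws tentative resampling values and random priorities, builds a conflict graph among selected events that would actually switch a shared variable, and keeps only the lexicographically-first MIS of that graph; it then proves by an explicit coupling with a hybrid sequential algorithm that this produces the same distribution as a legitimate sequential MT execution, so that the witness-tree machinery applies.

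Two further gaps. First, you have no mechanism for variables with $P_\Omega(X_i = a_i)$ close to $1$: resampling such a variable once is unlikely to falsify the bad-events containing it, which is why the paper needs the capacities $C_i$ (allowing a variable to be resampled up to roughly $1/(Mq_i)$ times per sub-round) and a new parallel subroutine for the capacitated packing problem. Second, your accounting of the factor $M$ is misplaced: in the paper, a resampling in round $t$ has a witness tree of height \emph{exactly} $t$ (Proposition~\ref{height-prop}), so the weight of trees forcing $t$ rounds decays as $(1+\epsilon)^{-t}$ and the number of rounds is $O(\epsilon^{-1}\log W)$ with no $M$; the factor $M$ enters because each round needs $O(M\log n)$ sub-rounds, since a true bad-event is eliminated in a given sub-round only with probability about $1/(2M)$. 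Your proposed ``depth decays geometrically in $t/M$'' bound does not correspond to either part of that analysis.
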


We list a few applications of these new parallel algorithms:
\begin{enumerate}

\item{\bf SAT with bounded variable occurrences} We have a SAT instance, in which each clause contains at least $k$ variables. We are also guaranteed that each variable occurs in at most $L$ clauses. Then, under the condition $L \leq \frac{2^{k+1} (1 - 1/k)^k}{(k-1)(1+\epsilon)} - \frac{2}{k}$
the parallel MT algorithms find a satisfying assignment in time $\frac{k^{O(1)} \log^{O(1)} n}{\epsilon}$.

\item {\bf Hypergraph coloring}. Suppose we are given a $k$-uniform hypergraph, in which each vertex participates in at most $L$ edges. We wish to $c$-color the vertices, so that no edge is monochromatic. Then, under the condition  $L \leq \frac{c^k (1-1/k)^{k-1}
   }{(1+\epsilon) (c-1) k}$
the parallel MT algorithm finds a good coloring in time $\frac{k^{O(1)} \log^{O(1)} n}{\epsilon}$.

\end{enumerate}

\section{The variable-assignment LLLL}
\label{new-mt-proof}

As we have said, we do not change the MT algorithm in any way. The only change is to the analysis. The analysis of \cite{moser-tardos} is based on two main idea: a \emph{resampling table}, and  building \emph{witness trees} for each resampling that occurs during a run of the algorithm. A witness tree lists the full history of the variables involved in a resampling ---  ``why'' a given resampling occurred. \cite{moser-tardos} describes this in much greater detail, and we recommend that the reader should read that paper for a careful and thorough explanation of the witness tree analysis.

The idea of the resampling table is that, at the very beginning of the algorithm, you draw an infinite list of all the future values for each variables. Then resampling table entry $R(i,j)$ gives the $j^{\text{th}}$ value for each variable $X_i$. Initially, you set $X_i = R(i,1)$; when you need to resample $i$, you set $X_i = R(i,2)$, and so forth. After drawing $R$, the remainder of the MT algorithm becomes deterministic. One can determine, for each witness tree, necessary conditions to hold on $R$.

In the MT algorithm, the choice of which bad-event to resample can be arbitrary, and can even be under the control of an adversary. In fact, MT shows something even stronger than this: even if the choice of which bad-event to resample depends on $R$, then the MT algorithm must still converge with high probability. Thus, the LLLL criterion is strong enough to show convergence even if the resampling is determined by a \emph{clairvoyant adversary}.

Our analysis is also based on witness trees, but it dispenses with the resampling table. Instead, variables are assumed to be resampled in an \emph{on-line} fashion. The choice of which bad-event to resample can be arbitrary, but must depend \emph{solely on the prior state (not the future state) of the system.} 

Because we impose this restriction on the resampling rule, we can use stochasticity to analyze our witness trees, in a way which is not possible with the MT framework. The basic idea is that, whenever we resample some bad-event $B$, the distribution for the new values for its variables is the same as the law of $\Omega$, \emph{even conditional on all prior state.} This is simply not true in the MT framework: the choice of resampling rule could produce a dependency on the future.

We will eventually take a union-bound over witness trees, so it is critical to prune the space of witness trees as much as possible. In other words, we will need the most succinct possible explanation of each resampling. Let us consider a simple example of how we can use our stronger stochasticity assumption to analyze more succinct witness trees. Suppose that we have some bad-event $B \ni (i,j)$, and we want to explain why we eventually resampled $X_i = j$. Suppose there were two earlier events $B_1, B_2$ which included $(i,j')$. These events would be placed as children of $B$ in the standard MT witness tree. This means that we encounter $B_1, B_2$ (in an unspecified order), and when we encounter the second one we select $X_i = j$. 

We can view the witness tree for $B$ as making a prediction: namely, that at the appropriate time, when we choose to resample variable $X_i$, then we set $X_i = j$. If we can fix a specific time at which this prediction should hold, then we can bound its probability. For this purpose, it suffices to only record information about the \emph{later} of the two events $B_1, B_2$. We can discard the information about the earlier resampling. By only retaining the latest occurrence of each variable, we still have all the information we need to deduce the resamplings.  The stochasticity now tells us that whenever the resample the latter of $B_1, B_2$, the new values for the variables in it must have the same distribution as in $\Omega$. The reason that this is true is that the choice of whether to resample $B_1$ or $B_2$ first \emph{cannot depend on the new values of the variables.} 

We see that we have ``compressed'' the information relevant of $B$. This significantly prunes the space of witness trees, but we will have to work much harder to show that it is sufficient.

\subsection{Forming witness trees}
When building witness trees, we will maintain the following key invariant: for any node $v$ in the tree labeled by $B$, the children of $v$ receive distinct labels $B_1, \dots, B_s$ such that $\{B_1, \dots, B_s \}$ is an orderable set for $B$.  This is the key principle behind our new criterion.

We now describe how to form a witness tree for an event of interest $E$. Suppose we have listed, in order, all the bad-events that were ever resampled during MT, listed as $B_1, \dots, B_T$. This is referred to as the \emph{execution log}. Suppose $E = B_t$. We start with the resampled event $E$ at the root. Starting at time $t-1$, we proceed backward through the execution log. For each bad-event $B$ encountered, we see if there is some node $v \in \tau$ for which $B$ is eligible. If so, we add $B$ to the \emph{deepest} such position (breaking ties arbitrarily). We give the following more precise definition of eligibility:
\begin{Definition}[Eligibility]
Suppose we have formed a (partial) witness tree $\tau$, and we have a node $v \in \tau$ labeled by $B$. Suppose the children of $v$ receive distinct labels $B_1, \dots, B_s$. Then we say a bad-event $B'$ is \emph{eligible} for $v$ if $B' \neq B_1, \dots, B_s$ and if $\{B_1, \dots, B_s, B' \}$ is orderable for $B$.
\end{Definition}

We distinguish between two related notions of the witness tree. Let $\hat \tau^{t}$ denote the witness tree corresponding to the resampling at time $t$ during an execution of the algorithm; this is a random variable. We also denote by $\hat \tau^{t_1}_{t_0}$ the witness tree produced in this way, in which we only keep track of events after time $t_0$ (that is, this witness tree only records events between times $t_0$ and $t_1$ inclusive). If $t_0 > t_1$, then $\hat \tau^{t_1}_{t_0}$ is defined to be the null tree. We will sometimes omit the superscript to simplify the notation. By definition $\hat \tau^t_{1} = \hat \tau^t$. 

We also sometimes may wish to discuss a certain labeled tree, and under what conditions it could have been produced. We use then the notation $\tau$ to denote a witness tree in this sense.

One simple definition we will use often:
\begin{Definition}
Consider any variable $i$, and consider a tree $\tau$ with a node $v$. We say that $v$ \emph{involves} $i$, if $v$ is labeled by some bad-event $B$, and $(i,j) \in B$ for some $j$.
\end{Definition}

We list some easy properties of the witness trees produced in this manner:
\begin{proposition}
\label{tree-properties-prop}
\begin{enumerate}
\item Consider any bad-event $B$. Consider the leaf nodes of $\hat \tau$ labeled by $B$; all such nodes must have \emph{distinct} depths in the tree.
\item Consider any variable $i$, and, among all the nodes $v \in \hat \tau$ involving $i$, consider the set of such nodes which are greatest depth in the tree. While it is possible that there are multiple such nodes $v_1, \dots, v_r$, all such nodes must be labeled by $B_1, \dots, B_r$ which \emph{agree} on variable $i$.
\end{enumerate}
\end{proposition}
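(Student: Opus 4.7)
The plan is to exploit the ``deepest eligible position'' rule of the backward construction: if either claim fails, I will exhibit a strictly deeper legal attachment point that the rule should have chosen, yielding a contradiction.

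For part (1), I would assume for contradiction that two distinct leaves $v_1, v_2$ of $\hat\tau$ carry the same label $B$ and share depth $d$, and without loss of generality take $v_1$ to be placed first in the backward pass. The first key step is a monotonicity observation: since $v_1$ is a leaf in the \emph{final} tree and the construction only ever adds nodes, $v_1$ is still childless at the moment $v_2$ is placed. Eligibility of $B$ for $v_1$ then reduces to requiring $\{B\}$ to be orderable for $v_1$'s label $B$, which is immediate from clause (O1). Hence $v_1$, sitting at depth $d$, was a legal attachment point at the moment $v_2$ was inserted, so the rule should have placed $v_2$ at depth $d+1$ rather than at depth $d$, a contradiction.

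For part (2), I would take distinct nodes $v_1, v_2$ at the common maximal depth $d_i$ among nodes of $\hat\tau$ involving $i$, with labels $B_1, B_2$ satisfying $(i,j_1) \in B_1$, $(i,j_2) \in B_2$ and $j_1 \neq j_2$; again WLOG $v_1$ is placed first. The structural input is that maximality of $d_i$ forces every descendant of $v_1$ to avoid $i$, so at the instant $v_2$ is placed, each current child $C_\ell$ of $v_1$ contains no pair $(i,\cdot)$; in particular $B_2 \neq C_\ell$. By the key invariant the current children $\{C_1,\dots,C_k\}$ admit an orderability ordering for $B_1$; I would extend it by appending $B_2$ with witness $z = (i,j_1) \in B_1$, noting $z \sim B_2$ (via $(i,j_2)$ with $j_2 \neq j_1$) and $z \not\sim C_\ell$ (since $C_\ell$ involves no $(i,\cdot)$). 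This shows $B_2$ was eligible for $v_1$, so $v_2$ should have been inserted at depth $d_i + 1$, not $d_i$, again a contradiction.

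The only real subtlety, which I would flag as the main (minor) obstacle, is being careful with the distinction between the set of children of $v_1$ at the moment $v_2$ is processed versus its final set of children in $\hat\tau$. The argument works because the backward construction is monotone --- children are only added, never removed --- so properties of $v_1$'s final subtree (being a leaf in part (1), having no $i$-involving descendants in part (2)) already hold in every earlier partial tree, which is exactly what we need to certify eligibility at the right moment.
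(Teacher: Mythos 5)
Your proof is correct and is essentially a careful elaboration of the paper's own (one-sentence) argument: in both parts, the earlier-processed occurrence sits at the disputed depth when the later one is inserted, the earlier bad-event is shown to be eligible there (via (O1) for part 1, and by appending to the children's (O2) ordering with witness $(i,j_1)$ for part 2), so the deepest-eligible-position rule forces strictly greater depth. Your explicit handling of the monotonicity of the backward construction is a worthwhile detail the paper leaves implicit.
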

\begin{proof}
The earlier bad-event would have been eligible to be a child of the later bad-event, and hence would have been placed either there or deeper in the tree.
\end{proof}

In light of Proposition~\ref{tree-properties-prop}, we may define the \emph{active value} for each variable:
\begin{Definition}[The active value of a variable]
Consider any variable $i$, and, among the set of nodes $v \in \tau$ involving $i$, consider the nodes at greatest depth in the tree. All such nodes contain $(i,j)$ for some common value $j$. We denote by $A_i(\tau)$, the \emph{active value} of variable $i$, by this common value $j$. 

If variable $i$ does not appear in $\tau$, we define $A_i(\tau) = \top$, the sure value. By convention, we use $X_i = \top$ as a shorthand for the sure event (the entire probability space). For example, $P_{\Omega}(X_i = \top) = 1$.
\end{Definition}

We note that these types of witness trees look very different from the standard MT construction. For example, the layers in the tree (and even the children of a common parent) do not necessarily form an independent set; there can be multiple copies of a single bad-event in a given layer.

Suppose we are given a tree $\tau$ and a time $t_1$; we want to estimate the probability that $\hat \tau^{t_1} = \tau$. This is the key to the MT proof strategy. We can imagine running the MT algorithm and see whether, so far, it appears that it is still possible for $\hat \tau^{t_1} = \tau$. This is a kind of dynamic process, in which we see what conditions are still imposed in order to achieve this tree. One key point in our rule for forming witness trees is that, as we run the MT algorithm, we will be able to deduce not just $\hat \tau^{t_1}$ but also $\hat \tau_{t_0}^{t_1}$ for all $t_0 \geq 1$. 

We will often omit the superscript $t_1$ in the following; it should be understood.

\begin{proposition}
\label{update-tree-prop}
Suppose we are given the partial witness tree $\hat \tau_{t}$, and we encounter a bad-event $B$ at time $t$. Then $\hat \tau_{t+1}$ is uniquely determined, according to the following rule: if there is a leaf node labeled by $B$, select the deepest such leaf node $v$ (by Proposition~\ref{tree-properties-prop} it is unique) and we have $\hat \tau_{t+1} = \hat \tau_{t} - v$. Otherwise we have $\hat \tau_{t+1} = \hat \tau_{t}$.
\end{proposition}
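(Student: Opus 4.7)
The plan is to treat the proposition as the step-by-step inverse of the backward construction that defines the witness trees. By definition $\hat \tau^{t_1}_t$ is produced by scanning the execution log from time $t_1$ down to time $t$, placing each encountered event at the deepest eligible position in the current partial tree; $\hat \tau^{t_1}_{t+1}$ is the output of the same scan stopped one step earlier. So $\hat \tau^{t_1}_t$ arises from $\hat \tau^{t_1}_{t+1}$ by performing exactly one extra step: take $B_t = B$ and either (i) place it as a new leaf at the deepest eligible position, if any such position exists, or (ii) leave the tree unchanged, if $B$ is not eligible anywhere. It thus suffices to show that this single step can always be undone from the knowledge of $\hat \tau^{t_1}_t$ and $B$ alone.

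The crucial observation is that $B$ is always eligible as a child of any existing leaf of the tree labeled by $B$. Such a leaf has no children, so no label needs to be excluded, and by clause~(O1) the set $\{B\}$ is orderable to $B$; hence the eligibility definition is met. Therefore, whenever $B$ is eligible somewhere in $\hat \tau^{t_1}_{t+1}$, the deepest eligible position is \emph{strictly} below every existing leaf labeled $B$, since each such leaf by itself yields a deeper eligible location. Consequently the leaf added at time $t$ is strictly deeper than any other leaf labeled $B$; by Proposition~\ref{tree-properties-prop}(1) all leaves with a common label have distinct depths, so this newly-added node is the unique deepest leaf labeled $B$ in $\hat \tau^{t_1}_t$. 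Deleting it recovers $\hat \tau^{t_1}_{t+1}$, giving the first clause of the proposition.

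For the second clause, I need to show that the absence of a leaf labeled $B$ in $\hat \tau^{t_1}_t$ forces case~(ii). If $\hat \tau^{t_1}_{t+1}$ contained any $B$-labeled leaf, the key observation would make $B$ eligible there, so case~(i) would fire and $\hat \tau^{t_1}_t$ would inherit a fresh $B$-labeled leaf, contradicting the hypothesis. Thus $\hat \tau^{t_1}_{t+1}$ has no $B$-labeled leaf either, and the backward step at time $t$ either leaves the tree alone or appends a new $B$-labeled leaf; only the former option is consistent with $\hat \tau^{t_1}_t$ having no $B$-labeled leaf, so $\hat \tau^{t_1}_t = \hat \tau^{t_1}_{t+1}$.

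The only substantive ingredient is the self-eligibility of $B$ at a $B$-labeled leaf via clause~(O1), which is what pins the newly created node as strictly deepest and lets us identify it uniquely without replaying the entire backward construction. Everything else is a mechanical case split on whether or not $B$ was placed during the single backward step that separates $\hat \tau^{t_1}_t$ from $\hat \tau^{t_1}_{t+1}$.
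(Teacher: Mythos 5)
Your proof is correct and follows essentially the same route as the paper's: both arguments hinge on the observation that $B$ is always eligible at any leaf already labeled $B$ (via condition (O1)), which forces the node placed during the backward step to sit strictly deeper than every pre-existing $B$-labeled leaf and hence to be the unique deepest such leaf in $\hat \tau_t$. The paper phrases this as a contradiction (if $v$ were already present in $\hat \tau_{t+1}$, a further copy of $B$ would be placed below it) while you argue directly, but the content is identical.
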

\begin{proof}
First, suppose that $\hat \tau_t$ did contain such a node $v$. It must be that $v \not \in \hat \tau_{t+1}$. For, if so, then when forming $\hat \tau_{t}$ from $\hat \tau_{t+1}$, we would have placed $B$ as a child of $v$; that is, $\hat \tau_{t}$ would include an additional copy of $B$. So $\hat \tau_{t+1}$ is missing the node $v$ from $\hat \tau_{t}$. As each time step can only affect a single node in the witness tree, it must be that $\hat \tau_{t+1} = \hat \tau_t - v$.

Second, suppose that $\hat \tau_t$ contained no such node $v$. When forming $\hat \tau_{t}$ from $\hat \tau_{t+1}$, we either make no changes or add a single node labeled by $B$. In the latter case, $\hat \tau_{t}$ would contain a leaf node labeled by $B$, which has not occurred. Hence it must be that $\hat \tau_{t} = \hat \tau_{t+1}$ as claimed.
\end{proof}

The other key point is that, from the partial tree $\hat \tau_{t}$, we can deduce some information about the variables:
\begin{proposition}
Consider any variable $i$. At time $t$ of the MT algorithm, we must have $X_i = A_i(\hat \tau_t)$.
\end{proposition}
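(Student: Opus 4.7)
I will prove this by reverse induction on $t$, starting from $t = t_1 + 1$. At the base case, $\hat\tau_{t_1+1}$ is the null tree, so $A_i(\hat\tau_{t_1+1}) = \top$ for every variable $i$, and the claim $X_i = \top$ holds trivially. For the inductive step, I compare the states at times $t+1$ and $t$. By Proposition~\ref{update-tree-prop}, $\hat\tau_t$ is obtained from $\hat\tau_{t+1}$ either by adding a single new leaf $v$ labeled by $B_t$ at the deepest eligible position, or by leaving the tree unchanged; meanwhile, the MT state transitions from time $t$ to time $t+1$ by resampling the variables of $B_t$, which fires at time $t$. In particular $X_i = j_0$ at time $t$ whenever $(i,j_0) \in B_t$, and $X_i$ is preserved by the step whenever $i \not\sim B_t$.

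The case $i \not\sim B_t$ is immediate: $X_i$ is the same at times $t$ and $t+1$, the new leaf (if any) does not involve $i$, so $A_i(\hat\tau_t) = A_i(\hat\tau_{t+1})$, and the claim follows from the induction hypothesis. The substance is the case $(i, j_0) \in B_t$, where $X_i = j_0$ at time $t$ and I must show $A_i(\hat\tau_t) = j_0$. If $\hat\tau_{t+1}$ has no node involving $i$, then either $B_t$ gets placed as $v$, making $v$ the unique $i$-involving node of $\hat\tau_t$ and giving $A_i(\hat\tau_t) = j_0$; or $B_t$ is not placed, so $\hat\tau_t$ still has no node involving $i$ and $A_i(\hat\tau_t) = \top$, making the equation vacuous. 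If $\hat\tau_{t+1}$ instead has a deepest $i$-involving node $u$, labeled by $B_u$ with $(i, j_u) \in B_u$, the sub-case $j_u = j_0$ is automatic: regardless of whether $B_t$ is placed (and regardless of the depth at which it lands), every deepest $i$-involving node of $\hat\tau_t$ has value $j_0$ by Proposition~\ref{tree-properties-prop}.

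The real work is the sub-case $j_u \neq j_0$, where I need to force $B_t$ to be placed strictly below $u$. The key observation is that, because $u$ is the deepest node of $\hat\tau_{t+1}$ involving $i$, \emph{none} of $u$'s existing children $C_1, \dots, C_s$ can involve $i$ (any such child would sit strictly deeper than $u$). Take any orderable arrangement $C_1, \dots, C_s$ of the children with witnesses drawn from $B_u$, and append $B_t$ with witness $z = (i, j_u) \in B_u$. Then $z \sim B_t$ because $(i, j_0) \in B_t$ with $j_0 \neq j_u$, and $z \not\sim C_j$ for every $j$ because $C_j$ does not involve $i$; moreover $B_t \neq C_j$ for each $j$, since otherwise $C_j$ would be a child of $u$ labeled by $B_t$ and hence would involve $i$. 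Thus $\{C_1, \dots, C_s, B_t\}$ is orderable to $B_u$, so $B_t$ is eligible at $u$, and the new leaf $v$ lands at depth at least $\mathrm{depth}(u) + 1$. Since every $i$-involving node of $\hat\tau_{t+1}$ has depth at most $\mathrm{depth}(u)$, the leaf $v$ is strictly the deepest $i$-involving node of $\hat\tau_t$, giving $A_i(\hat\tau_t) = j_0$.

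The main obstacle is this orderability/eligibility step: one must notice that the very assumption that $u$ is the deepest $i$-involving node of $\hat\tau_{t+1}$ blocks variable $i$ from appearing among $u$'s children, making $(i, j_u)$ a canonical new witness that can be appended to any existing orderable arrangement. Once that observation is in place, the depth comparison between $v$ and the prior $i$-involving nodes is immediate and the induction closes.
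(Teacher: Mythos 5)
Your proof is correct, and it is organized differently from the paper's. The paper argues by contradiction, looking \emph{forward} in time: if $X_i=j'\neq j=A_i(\hat\tau_t)$ at time $t$, then since the deepest $i$-involving node must correspond to a future resampling (which requires $X_i=j$ at that moment), some intervening bad-event $B'\ni(i,j')$ must be resampled first, and $B'$ would then be eligible as a child of that deepest node and land strictly deeper --- a contradiction. You instead run a reverse induction on $t$, tracking how $A_i(\hat\tau_t)$ evolves as the tree is built backward and matching it against the MT state at each step. Both arguments hinge on the same key eligibility observation, but your write-up is more complete on the one point the paper elides: the paper simply asserts that $B'$ ``would be eligible,'' whereas you actually verify orderability of $\{C_1,\dots,C_s,B_t\}$ by noting that no child of the deepest $i$-involving node $u$ can itself involve $i$, so the witness $(i,j_u)\in B_u$ can be appended to any existing orderable arrangement of $u$'s children (and this same observation rules out the degenerate case where the children satisfy only condition (O1)). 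The cost of your route is a longer case analysis (the $i\not\sim B_t$ and $j_u=j_0$ cases, plus the base case), but it buys a self-contained, fully checked argument; the paper's version is shorter but leaves the orderability step to the reader. One cosmetic point: the backward transition you attribute to Proposition~\ref{update-tree-prop} is really just the definition of the tree-forming procedure (that proposition states the forward direction), but nothing in your argument depends on this.
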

\begin{proof}
Suppose $B$ is a node of greatest depth containing variable $i$, and we have $(i,j) \in B$, where $j = A_i(\hat \tau_t)$. Suppose $X_i = j' \neq j$ at time $t$.

In order to include $B$ in the witness tree $\hat \tau$, we must eventually resample $B$, which implies that eventually we must have $X_i = j$. As $X_i = j'$ at time $t$, this implies that we must first encounter some bad-event $B' \ni (i,j')$. But then $B'$ would be eligible to be placed as a child of $B$, and so would be placed there or lower. This contradicts that $B$ is the greatest-depth occurrence of variable $i$.
\end{proof}

These propositions together allow us to prove the Witness Tree Lemma:
\begin{lemma}[Witness Tree Lemma]
\label{witness-tree-lemma}
Let $\tau$ be a witness tree with nodes labeled $B_1, \dots, B_s$. Then the probability of ever observing this witness tree is bounded by
$$
P(\text{$\hat \tau^{t} = \tau$ for some $t \in \mathbf Z$}) \leq P_{\Omega}(B_1) \cdots P_{\Omega} (B_s)
$$

We sometimes refer to the RHS as the \emph{weight} of $\tau$,
$$
w(\tau) = P_{\Omega}(B_1) \cdots P_{\Omega} (B_s)
$$
\end{lemma}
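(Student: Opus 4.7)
The plan is to exploit the forward-time view established by Proposition~\ref{update-tree-prop}: for a fixed target tree $\tau$, the event $\{\hat \tau^{t_1} = \tau \text{ for some } t_1\}$ corresponds exactly to a specific trajectory of ``leaf strippings'' of $\tau$ in a forward MT execution. Concretely, if we initialize a running partial tree $\rho_1 := \tau$ and, at each step $s$, set $\rho_{s+1}$ to be either $\rho_s$ with its deepest leaf labeled $B_s$ stripped (when such a leaf exists) or $\rho_s$ unchanged, then $\hat \tau^{t_1} = \tau$ for some $t_1$ iff $\rho$ is eventually emptied. I would proceed by induction on $|\tau|$, peeling off one leaf at a time and accumulating one factor of the form $P_\Omega(B)$ at each stripping.

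For the inductive step, I would focus on the first (or dually, the last) stripping event. Suppose $s_0$ is the first time at which $\rho$ shrinks, and let $v^*$ be the leaf of $\tau$ stripped, labeled $B^*$. Because the MT algorithm is about to resample $B^*$ at time $s_0$, $B^*$ must be true at that moment, so $X_i = j$ for every $(i,j) \in B^*$. Combining Proposition~\ref{tree-properties-prop} (depth ordering of nodes involving each variable) with the proposition identifying $X_i$ with $A_i(\hat \tau_s)$, these required values are precisely the values supplied by the most recent fresh draw of each variable $X_i$ prior to time $s_0$ --- which is either the initial draw or the resampling of some event corresponding to a deeper leaf in $\tau$. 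Crucially, the orderability hypothesis built into witness tree construction guarantees that this ``most recent fresh draw'' is well-defined and consistent across leaves.

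The key stochasticity ingredient is that whenever the MT algorithm resamples a bad-event, the new values of its variables are drawn iid from $\Omega$, independent of the entire prior history (including the adversary's choice of which event to resample at that moment). This gives us the factor $P_\Omega(B^*)$ for the stripping of $v^*$: the probability that the relevant ``most recent fresh draw'' yielded the specific values required for $B^*$ to be true is exactly $\prod_{(i,j) \in B^*} p_{ij} = P_\Omega(B^*)$. Having paid this factor, the remaining process is a forward stripping of a smaller tree $\tau \setminus \{v^*\}$, for which the inductive hypothesis gives an upper bound of $w(\tau)/P_\Omega(B^*)$. Multiplying yields the claimed bound $w(\tau)$.

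The main obstacle is the bookkeeping: I must show that the ``fresh draws'' associated with the $s$ different strippings correspond to pairwise distinct resampling steps (or the initial draw of distinct variables), so that their probabilities multiply by stochastic independence rather than being correlated. This is where the orderability condition (O2) pays off: when placing a child $B'$ under a parent $v$ labeled $B$, some coordinate $z \in B$ is in conflict with $B'$ but not with any previously added sibling, so the fresh draw that supplied $z$'s value for $v$'s resampling can be unambiguously attributed to $B'$ (or a descendant thereof) rather than being ``reused'' by another sibling. A careful tracking of this assignment --- essentially a combinatorial matching between tree nodes and their supplying resample events, guaranteed by orderability --- is what makes the induction go through and delivers exactly the product $\prod_i P_\Omega(B_i) = w(\tau)$.
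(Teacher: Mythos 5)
Your high-level architecture is the same as the paper's: induct on the tree, peel off the leaf $v^*$ that is stripped earliest in forward time, and use the fact that resampled values are distributed according to $\Omega$ independently of all prior state. But the accounting in your key step is not the paper's, and as written it has a real gap. You charge $P_{\Omega}(B^*)$ at the stripping time $s_0$ by attributing it to ``the most recent fresh draw of each variable of $B^*$ prior to $s_0$.'' That draw is selected by a \emph{backward-looking}, trajectory-dependent rule (``the last redraw of $X_i$ before $s_0$''), which is not a stopping time, so you cannot conclude that its value is distributed per $\Omega$ independently of the rest of the trajectory; moreover it need not be the initial draw or the resampling of a deeper leaf of $\tau$ --- it can be the resampling of a bad-event that never enters $\tau$ at all (one agreeing with $B^*$ on that variable). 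This is precisely the resampling-table-style bookkeeping that these compressed witness trees do not support: unlike in Moser--Tardos, the tree does not determine \emph{which} draws are being constrained, so ``injectivity of the matching'' (which, as you note, does hold) is not enough to make the probabilities multiply.

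The second, more concrete failure is in the inductive step itself. After paying $P_{\Omega}(B^*)$ you invoke the induction hypothesis to bound the remaining stripping of $\tau \setminus \{v^*\}$, started from the state at time $s_0^+$, by $w(\tau \setminus \{v^*\})$. That statement is false for an arbitrary starting state: for a single-node tree $\{B'\}$ started from a state in which $B'$ is already true, the probability of observing that tree can be $1$, not $P_{\Omega}(B')$. The paper's proof repairs exactly this by strengthening the induction hypothesis to hold uniformly over starting states with a correcting denominator, namely
$$
P\Bigl(\bigvee_{t>0}\hat\tau^t=\tau\Bigr)\;\leq\;\frac{\prod_{B\in\tau}P_{\Omega}(B)}{\prod_i P_{\Omega}(X_i=A_i(\tau))},
$$
and by charging, at the resampling of the stripped leaf $B$, the \emph{forward-looking} probability $\prod_{i\sim B}P_{\Omega}(X_i=A_i(\tau-v))$ that the fresh draw produces what the \emph{remaining} tree needs --- a requirement that is a deterministic function of $\tau$ and $v$ and is evaluated on a draw that is genuinely independent of the conditioning. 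The factors $P_{\Omega}(B)$ then appear only at the end, by telescoping these payments together with the initial sampling. To fix your proof you would need to replace the backward attribution with this strengthened, state-uniform induction hypothesis (or an equivalent device); the combinatorial matching you propose in your last paragraph does not by itself supply the required independence.
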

\begin{proof}
The first step of the MT algorithm is to draw all the variables independently from $\Omega$. We may consider fixing the variables to some arbitrary (not random) values, and allowing the MT algorithm to run from that point onward.  We refer to this as \emph{starting at an arbitrary state of the MT algorithm.} We prove by induction on $\tau$ the following: for any witness tree $\tau$, and starting at any state of the MT algorithm, then
we have
\begin{equation}
\label{tt1}
P(\bigvee_{t > 0} \hat \tau^t = \tau) \leq \frac{\prod_{B \in \tau} P_{\Omega}(B)}{\prod_i P_{\Omega}(X_i = A_i(\tau))}
\end{equation}

First, the base case when $\tau = \emptyset$. Then this is vacuously true, as the RHS of (\ref{tt1}) is equal to $1$.

Next, for the induction. Suppose that $\hat \tau^t = \tau$ for some $t > 0$. Then, a necessary condition of this is that, at some time $t' < t$, we must resample some $B \in \mathcal B$ such that a leaf node of $\tau$ is labeled by $B$. (If not, then $\hat \tau^t$ could never acquire such a leaf node.) Suppose that $t'$ is the earliest such time and that $v$ is a leaf node labeled by $B$. We condition now on a specific value for $t'$ and $B$.

For each $i \sim B$, we must resample variable $i$ to take on value $A_i(\tau - v)$ (recall our convention that if $A(\tau - v) = \top$, then this is automatically true.) This has probability $P_{\Omega}(X_i = A_i(\tau-v))$. Next, starting at the state of the system at time $t'+1$, we must satisfy $\hat \tau^{t-1} = \tau - v$. We need to estimate the probability that this occurs, conditional on a fixed choice of $t', B$. Crucially, the inductive hypothesis gives an upper bound on this probability \emph{conditional on any state of the MT algorithm.} In particular, this upper bound applies even when we condition on $t', B$. Thus, we may multiply the two probabilities to obtain:
{\allowdisplaybreaks
\begin{align*}
P(\bigvee_{t > 0} \hat \tau^t = \tau) &\leq \prod_{i \sim B} P_{\Omega} (X_i = A_i(\tau-v)) \frac{\prod_{B' \in \tau-v} P_{\Omega}(B')}{\prod_i P_{\Omega}(X_i = A_i(\tau-v))} \\
&= \frac{\prod_{B' \in \tau} P_{\Omega}(B')}{P_\Omega(B) \prod_{i \not\sim B} P_{\Omega}(X_i = A_i(\tau-v)) } 
\end{align*}
}
If $i \not\sim B$ then $A_i(\tau - v) = A_i(\tau)$, while for each $i \sim B$ we have $(i, A_i(\tau)) \in B$. Hence the denominator is equal to $\prod_i P_{\Omega}(X_i = A_i(\tau))$ and the induction holds.

Next, we claim that the probability that there is some $t > 0$ such $\hat \tau^t = \tau$ is at most $w(\tau)$. (Note that this differs subtly from the induction; in the induction, we are showing a bound which applies to any starting state of the system; here, we are claiming that this bound holds when the MT algorithm begins with a random initialization.) To see this, note that a necessary condition to have $\hat \tau^t = \tau$ is that in the initial sampling of all relevant variables, each variable $i$ must take on value $A_i(\tau)$. Conditional on this event, the probability of $\hat \tau^t = \tau$ for some $t$ is still given by the inductive hypothesis, so we have
$$
P(\bigvee_{t > 0} \hat \tau^t = \tau) \leq \prod_i P_{\Omega}(X_i = A_i(\tau)) \times \frac{\prod_{B \in \tau} P_{\Omega}(B)}{\prod_i P_{\Omega}(X_i = A_i(\tau))} = w(\tau)
$$

\end{proof}

Finally, we show that each event in the execution log of the MT algorithm has a distinct witness tree. This is almost a triviality in the standard analysis of the MT algorithm, but here it is surprisingly subtle. For instance, there may be multiple resamplings of a bad-event $B$, and the later occurrences may have smaller witness trees. Nevertheless, all such trees are unique:
\begin{proposition}
\label{unique-prop}
Let $t_1 < t_2$; then $\hat \tau^{t_1} \neq \hat \tau^{t_2}$.
\end{proposition}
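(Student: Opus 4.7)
The plan is to assume for contradiction that $\hat \tau^{t_1} = \hat \tau^{t_2}$ and then evolve both trees forward by repeatedly applying Proposition~\ref{update-tree-prop}. The key observation is that the update rule stated there is a deterministic function of the pair (current partial tree, event at that time step): one either removes the deepest leaf with the matching label or does nothing. Hence if two partial trees are equal and we feed in the same event $B_t$, the resulting pair of trees is also equal. Starting from $\hat \tau^{t_1}_1 = \hat \tau^{t_1}$ and $\hat \tau^{t_2}_1 = \hat \tau^{t_2}$, which are equal by hypothesis, and iterating through the events $B_1, B_2, \dots, B_{t_1 - 1}$, equality is preserved at every step, so that $\hat \tau^{t_1}_{t_1} = \hat \tau^{t_2}_{t_1}$.

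By construction the left-hand side is the single-node tree $\{B_{t_1}\}$, so it suffices to show $|\hat \tau^{t_2}_{t_1}| \geq 2$. Comparing the root labels under the contradiction hypothesis forces $B_{t_1} = B_{t_2}$; denote this common event by $B$. The tree $\hat \tau^{t_2}_{t_1+1}$ is always nonempty since it contains the root $B$. If $\hat \tau^{t_2}_{t_1+1}$ already has at least two nodes, then so does $\hat \tau^{t_2}_{t_1}$, because the step from subscript $t_1 + 1$ down to $t_1$ can only add a leaf (never remove one), by Proposition~\ref{update-tree-prop} run in reverse. Otherwise $\hat \tau^{t_2}_{t_1+1} = \{B\}$, and when we attempt to place the event $B_{t_1} = B$ in this tree the root is eligible: it has no existing children, $B$ is not a duplicate child, and $\{B\}$ is orderable for $B$ by condition (O1). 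So a new leaf is added and $|\hat \tau^{t_2}_{t_1}| = 2$. In either case, $\hat \tau^{t_2}_{t_1}$ has strictly more nodes than $\hat \tau^{t_1}_{t_1}$, contradicting the equality derived in the first paragraph.

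The main subtle point is recognizing that Proposition~\ref{update-tree-prop} packages the forward evolution as a pure function of (tree, event), with no dependence on how the tree was originally produced or on which superscript index it belongs to; this is what justifies applying the update simultaneously to $\hat \tau^{t_1}_t$ and $\hat \tau^{t_2}_t$ and preserving equality. Once that is internalized, the remaining case analysis on $\hat \tau^{t_2}_{t_1+1}$ is routine, with (O1) supplying exactly the base case needed to force a second node into $\hat \tau^{t_2}_{t_1}$.
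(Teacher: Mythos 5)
Your proof is correct and follows essentially the same strategy as the paper: assume the two trees are equal and use the deterministic forward update of Proposition~\ref{update-tree-prop} to propagate equality of $\hat \tau^{t_1}_t$ and $\hat \tau^{t_2}_t$. The only difference is that the paper evaluates at $t = t_2$, where the contradiction is immediate ($\hat \tau^{t_1}_{t_2}$ is the null tree while $\hat \tau^{t_2}_{t_2}$ is a single node), which avoids your case analysis at $t = t_1$; your extra argument via (O1) is nevertheless valid.
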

\begin{proof}
Suppose $\hat \tau^{t_1} = \hat \tau^{t_2}$. Proposition~\ref{update-tree-prop} shows that the sequence of trees $\hat \tau_t$ is uniquely determined by the original value $\hat \tau$ and by the sequence of resamplings encountered the execution of the MT algorithm. As $\hat \tau^{t_1} = \hat \tau^{t_2}$ initially, we must have that $\hat \tau_t^{t_1} = \hat \tau_t^{t_2}$ for all $t \geq 1$.
But now substitute $t = t_2$; in this case, $\hat \tau_t^{t_1}$ is the null tree and $\hat \tau_t^{t_2}$ consists of a single node. So this is a contradiction.
\end{proof}

\begin{theorem}
Suppose there is $\mu: \mathcal B \rightarrow [0, \infty)$ satisfying the following condition:
$$
\forall B \in \mathcal B, \mu(B) \geq P_{\Omega}(B) \sum_{\substack{\text{$Y$ orderable}\\\text{to $B$}}} \prod_{B' \in Y} \mu(B')
$$
then the MT terminates with probability 1. The expected number of resamplings of a bad-event $B$ is at most $\mu(B)$.
\end{theorem}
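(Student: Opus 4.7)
The plan is to follow the standard Moser--Tardos aggregation strategy, which is now essentially set up by Lemma~\ref{witness-tree-lemma} and Proposition~\ref{unique-prop}. Writing $N(B)$ for the number of times $B$ is resampled during an execution, each such resampling at some time $t$ produces a witness tree $\hat\tau^{t}$ whose root is labeled $B$, and by Proposition~\ref{unique-prop} distinct resamplings yield distinct witness trees. Hence
\[
\bE[N(B)] \;=\; \sum_{\tau:\ \text{root labeled } B} \Pr\bigl(\exists t : \hat\tau^{t} = \tau\bigr)
\;\leq\; \sum_{\tau:\ \text{root labeled } B} w(\tau)
\;=:\; T(B),
\]
where the second inequality is the Witness Tree Lemma. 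It remains to show $T(B) \leq \mu(B)$.

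For this I would define $T_d(B)$ to be the same sum restricted to trees of depth at most $d$, and prove by induction on $d$ that $T_d(B) \leq \mu(B)$. The base case $d = 0$ gives $T_0(B) = P_\Omega(B)$, which is at most $\mu(B)$ because the orderable set $Y = \emptyset$ contributes the empty product $1$ to the right-hand side of the hypothesis. For the inductive step, I would observe that any tree counted in $T_{d+1}(B)$ decomposes uniquely as a root labeled $B$ together with, for each member $B'$ of some orderable set $Y$ for $B$, a subtree of depth $\leq d$ rooted at $B'$; since the children of the root must receive \emph{distinct} labels, $Y$ plus one subtree per element of $Y$ determines the whole tree, and there is no over-counting. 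This yields
\[
T_{d+1}(B) \;=\; P_\Omega(B) \sum_{Y \text{ orderable to } B}\ \prod_{B' \in Y} T_d(B')
\;\leq\; P_\Omega(B) \sum_{Y \text{ orderable to } B}\ \prod_{B' \in Y} \mu(B')
\;\leq\; \mu(B),
\]
using the inductive hypothesis and the assumption on $\mu$. Since $\{T_d(B)\}_{d \geq 0}$ is nondecreasing and every witness tree has finite depth, monotone convergence gives $T(B) = \sup_d T_d(B) \leq \mu(B)$, which combined with the display above bounds $\bE[N(B)]$ by $\mu(B)$.

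Termination with probability one then follows from finiteness in expectation: $\sum_{B \in \mathcal B} \bE[N(B)] \leq \sum_{B \in \mathcal B} \mu(B) < \infty$ since $\mathcal B$ is finite, so the total number of resamplings is almost surely finite. The main conceptual work has already been absorbed into the preceding subsection, where the orderability notion was crafted precisely so that the Witness Tree Lemma and the uniqueness proposition both go through; the aggregation above is the natural analogue of the classical Moser--Tardos calculation, with orderable sets playing the role of independent subsets of the dependency neighborhood. The only mildly delicate step is verifying that the decomposition in the inductive step is indeed a bijection onto depth-$(d{+}1)$ trees rooted at $B$, which rests on the distinct-labels property of children of a common node.
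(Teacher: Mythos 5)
Your proposal is correct and follows essentially the same route as the paper: the paper also bounds the total weight of witness trees rooted at $B$ by $\mu(B)$ via induction on tree height (using the invariant that the children of any node form an orderable set with distinct labels), and then combines Proposition~\ref{unique-prop} with Lemma~\ref{witness-tree-lemma} to bound the expected number of resamplings. Your write-up merely spells out the inductive decomposition in more detail (note the displayed $T_{d+1}(B) = \cdots$ should strictly be $\leq$, since not every root-plus-subtrees combination need be an achievable witness tree, but only the injectivity you highlight is needed for the upper bound).
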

\begin{proof}
First, by induction on tree-height, one can show that the total weight of all witness trees rooted in a bad-event $B$, is at most $\mu(B)$. This follows since the children of the root node form an orderable set for $B$.

Next, by Proposition~\ref{unique-prop}, each resampling of $B$ corresponds to a distinct witness tree. Hence, by the Lemma~\ref{witness-tree-lemma}, the expected number of witness trees rooted in $B$ is at most the sum of the weights of all such trees. Hence the expected number of resamplings of $B$ is at most $\mu(B)$.
\end{proof}

\subsection{Comparison to other LLL criteria}
\label{comparison-sec}
The original form of the LLL simply counted the \emph{number} of neighbors of each bad-event. As noted by \cite{bissacot}, the criterion can be strengthened by further analysis of the dependency graph; namely, for any bad-event $B$, one only needs to examine \emph{independent sets} of neighbors of $B$. Later, Pegden showed that this improved criterion applies also to the MT algorithm \cite{pegden}. Alternatively, following \cite{kolipaka}, one can derive this strengthened criterion as a corollary of \cite{shearer}. This can be stated as follows:

\begin{theorem}[Pegden's Criterion]
\label{pegden-thm}
Suppose there is $\mu: \mathcal B \rightarrow [0, \infty)$ satisfying the following condition:
\begin{align*}
\forall B \in \mathcal B \qquad \mu(B) \geq P_{\Omega}(B) \Bigl[ \mu(B) + \sum_{\substack{\text{$Y$ an independent set}\\\text{of neighbors to $B$}}} \prod_{B' \in Y} \mu(B') \Bigr]
\end{align*}
then the MT terminates with probability 1. The expected number of resamplings of a bad-event $B$ is at most $\mu(B)$.
\end{theorem}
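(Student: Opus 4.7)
The plan is to rerun the witness-tree analysis of Theorem~\ref{var-assignment-thm}, modifying only the eligibility rule. I would declare $B'$ eligible for a node $v$ labeled $B$ iff $B'$ is distinct from the existing children $B_1, \ldots, B_s$ of $v$ and $\{B_1, \ldots, B_s, B'\}$ is an independent set of the closed lopsidependency neighborhood of $B$ (treating $B$ as non-adjacent to itself, so in particular the singleton $\{B\}$ is permitted). This is the classical Moser--Tardos/Pegden construction; the backward sweep of the execution log still inserts each resampled event at its deepest eligible node.

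Under this new rule, Propositions~\ref{tree-properties-prop}--\ref{unique-prop} and the active-value analysis transfer with essentially the same proofs, since they relied only on the deepest-placement rule and the consistency of the partial tree with the execution log, not on orderability per se. Likewise the proof of Lemma~\ref{witness-tree-lemma} transfers almost verbatim: its induction peels off the earliest leaf and uses only the online-resampling stochasticity and the recovery of variable values from the partial tree. Thus $P(\exists t: \hat\tau^t = \tau) \leq w(\tau)$ holds for trees built under the independent-set rule.

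The central inductive step is then $\sum_{\tau \text{ rooted at } B} w(\tau) \leq \mu(B)$. Let $W_h(B)$ denote the total weight of such trees of height at most $h$. Factoring over the root's children, which form an independent set of the closed neighborhood of $B$ (hence either the singleton $\{B\}$ or a subset of the open neighbors of $B$), gives
\[
W_{h+1}(B) \leq P_\Omega(B)\Bigl[W_h(B) + \sum_{\substack{Y \text{ indep.\ set of}\\\text{open neighbors of } B}} \prod_{B' \in Y} W_h(B')\Bigr],
\]
in which the $W_h(B)$ summand captures the ``self-child'' case $Y = \{B\}$. By induction $W_h \leq \mu$, and Pegden's criterion collapses the right side to $\mu(B)$. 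Combined with Proposition~\ref{unique-prop}, each resampling of $B$ produces a distinct witness tree, so the expected number of resamplings of $B$ is at most $\mu(B)$, giving MT termination with probability $1$.

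The main subtlety is the self-child bookkeeping: our orderable-set framework forbade a $B$-labeled child of a $B$-labeled parent (since $B$ cannot disagree with itself on any variable), whereas the classical independent-set rule allows exactly one such degenerate child, and this contribution is precisely what the extra $\mu(B)$ summand in Pegden's criterion absorbs. Checking that this degenerate case is consistent with all of Propositions~\ref{tree-properties-prop}--\ref{unique-prop} (in particular the uniqueness of the deepest eligible leaf) is the only place where the proof genuinely departs from its Theorem~\ref{var-assignment-thm} template. If one prefers to sidestep this bookkeeping, Pegden's criterion also follows from Shearer's criterion via the route of \cite{kolipaka}.
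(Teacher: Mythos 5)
First, a point of comparison: the paper does not prove Theorem~\ref{pegden-thm} at all --- it imports the statement from \cite{pegden} and remarks that it can alternatively be derived from Shearer's criterion via \cite{kolipaka}. So you are supplying a proof where the paper supplies a citation; your closing fallback (``derive it from Shearer via \cite{kolipaka}'') is precisely the paper's stated provenance, while your direct witness-tree route is the more informative one, since it delivers the per-event bound of $\mu(B)$ on the expected number of resamplings rather than just convergence. Your counting recursion and the self-child bookkeeping are right; in particular you correctly observe that a $B$-labelled child excludes all other children (any open neighbor of $B$ is adjacent to a $B$-labelled sibling), which is exactly why the child sets split into ``$\{B\}$ alone'' versus ``independent set of open neighbors'' and why the extra $\mu(B)$ summand appears.

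The one genuine gap is the claim that Propositions~\ref{tree-properties-prop}--\ref{unique-prop} ``transfer with essentially the same proofs.'' The paper's proof of Proposition~\ref{tree-properties-prop} is the one-liner ``the earlier bad-event would have been eligible to be a child of the later bad-event, and hence would have been placed either there or deeper.'' Under your imposed independent-set eligibility rule this one-liner is false as stated: an event $B'$ with $B' \sim [v]$ need not be eligible for $v$, because $v$ may already have a child $B''$ with $B'' \sim B'$, which destroys independence of the augmented child set. You need a supplementary descent argument: any such obstructing child $B''$ satisfies $B' \sim B''$ and sits strictly deeper than $v$, so recursing on $B''$ through the (finite) tree eventually reaches a node at depth at least that of $v$ for which $B'$ \emph{is} eligible, whence $B'$ is still placed strictly below $v$. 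This repaired form is what both parts of Proposition~\ref{tree-properties-prop}, the well-definedness of the active values $A_i(\tau)$, and the recovery $X_i = A_i(\hat \tau_t)$ actually require, and those are the load-bearing inputs to Lemma~\ref{witness-tree-lemma}; without the patch the induction there is unjustified. (Pegden's own proof sidesteps the issue by using the unconstrained Moser--Tardos placement rule with the resampling-table witness tree lemma and deriving children-independence as a \emph{consequence} rather than imposing it as a constraint.) A final scope caveat: your argument is wedded to the lopsidependency relation, so it proves the lopsided instance of Theorem~\ref{pegden-thm} but not the incomparable instance where $\sim$ is ordinary dependency, which the paper's surrounding discussion also claims.
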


This criterion applies for any dependency graph. In particular, it applies if $\sim$ is defined in terms of lopsidependency ($B \sim B'$ if they disagree) or in terms of simple dependency ($B \sim B'$ if they agree or disagree). One counter-intuitive aspect to Theorem~\ref{pegden-thm} is that sometimes a \emph{denser} dependency graph gives a stronger criterion. (For the Shearer criterion, this can never occur). In particular, ignoring lopsidependency can give better bounds.

Strictly speaking, Pegden's criterion is incomparable to ours. However, in practice, the usual method of accounting for independent sets of neighbors comes from analyzing, for each variable $i$, the total set of all bad-events in which $i$ could participate. An independent set of neighbors of $B$ can contain one or zero bad-events involving each variable. Any higher-order interaction --- such as finding groups of variables participating jointly in bad-events --- is usually too complicated to analyze and is disregarded.

When we account for the dependency graph solely in terms of variable intersection, then we can replace the somewhat confusing concept of ``orderable set'' with a simpler (albeit slightly weaker) notion. 
\begin{Definition}
Given an event $E$, we say that a set of bad-events $Y \subseteq \mathcal B$ is \emph{assignable} to $E$, if either $Y = \{E \}$, \emph{or} there is an \emph{injective} function $f: Y \rightarrow E$, such that for all $B \in Y$, we have some $B \ni z \sim f(B) \in E$
\end{Definition}

\begin{proposition}
 If $Y$ is orderable to $B$, then it is assignable to $B$ (but not necessarily vice-versa)
\end{proposition}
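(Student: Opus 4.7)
The plan is to show that the ordering data required by (O2) already packages an injection of $Y$ into $E$ satisfying the assignability condition, modulo verifying distinctness of the witnesses $z_i$.

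First I would handle the trivial base case: if $Y = \{B\}$, then $Y$ is assignable by the clause ``$Y = \{E\}$'' in the definition of assignable, so there is nothing to check. So assume $Y = \{B_1, \dots, B_s\}$ is orderable to $E$ in the sense of (O2), with witnesses $z_1, \dots, z_s \in E$ satisfying $z_i \sim B_i$ and $z_i \not\sim B_1, \dots, B_{i-1}$. The obvious candidate is to set $f(B_i) = z_i$. Each $f(B_i)$ lies in $E$ by construction, and unpacking the definition of $z_i \sim B_i$, there exists some element $z \in B_i$ with $z \sim z_i = f(B_i)$, which is exactly the assignability requirement.

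The only substantive point is injectivity of $f$, and this is the step I would flag as the main (albeit mild) obstacle. Suppose $z_i = z_j$ for some $i < j$. Then, since $z_i \sim B_i$, we would have $z_j \sim B_i$; but the orderability condition applied at index $j$ demands $z_j \not\sim B_k$ for every $k < j$, in particular $z_j \not\sim B_i$, contradicting the equality. Hence the $z_i$ are pairwise distinct, so $f$ is injective.

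For the parenthetical ``but not necessarily vice-versa,'' I would briefly note that assignability allows $f$ to be an arbitrary injection from $Y$ into $E$, while orderability additionally requires the stronger combinatorial property that the witnesses can be introduced in an order consistent with the ``fresh witness'' condition $z_i \not\sim B_1, \dots, B_{i-1}$. A small concrete example (two bad-events that share both variables slots with $E$ but cannot be ordered without reusing a witness that already conflicts with the earlier event) suffices to show the implication is strict; I would leave this as a remark rather than a formal counterexample since the proposition itself only claims one direction.
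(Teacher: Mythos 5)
Your proof is correct and follows the same route as the paper: define $f(B_i) = z_i$ using the orderability witnesses, and establish injectivity by noting that $z_i = z_j$ with $i < j$ would force $z_j \sim B_i$, contradicting the ordering condition. The paper's argument is identical (it leaves the (O1) base case implicit), so no further comment is needed.
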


\begin{proof}
Let $Y = \{ B_1, \dots, B_s \}$, so that for each $i = 1, \dots, s$ there is some $z_i \in E$ with 
$$
z_i \sim B_i \qquad z_i \not \sim B_1, \dots, B_{i-1}
$$

Now define $f(B_i) = z_i$.  We claim that $f$ is injective. For, suppose $z_i = z_j$ and $i < j$. Then $z_i \sim B_i$, so $z_j \sim B_i$, which is a contradiction.
\end{proof}

When we sort bad-events by their variables, we obtain the following criteria; these are respectively the LLLL criterion and Pegden's LLL criterion:
\begin{proposition}
\begin{enumerate}
\item If for all bad-events $B$ we have
$$
\mu(B) \geq P_{\Omega}(B) \Bigl[ \mu(B) + \prod_{(i,j) \in B} \prod_{j' \neq j} \prod_{B' \ni (i, j')} ( 1  + \mu(B')) \Bigr]
$$
then the MT algorithm converges.

\item If for all bad-events $B$ we have
$$
\mu(B) \geq P_{\Omega}(B) \prod_{(i,j)\in B} \Bigl( 1 + \sum_{j'} \sum_{B' \ni (i,j')} \mu(B') \Bigr)
$$
\end{enumerate}
then the MT algorithm converges.
\end{proposition}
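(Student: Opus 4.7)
The strategy is to reduce both inequalities to Pegden's Criterion (Theorem~\ref{pegden-thm}), which already implies MT convergence. Part~(1) will use the lopsidependency graph ($B \sim B'$ iff $B, B'$ disagree on some variable), while part~(2) will use the full dependency graph ($B \sim B'$ iff they share any variable at all). In each case the remaining task is purely combinatorial: bound Pegden's RHS, $\mu(B) + \sum_{Y}\prod_{B' \in Y}\mu(B')$ (summed over independent sets $Y$ of neighbors of $B$), by the variable-indexed expression supplied in the proposition.

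For part~(1), I would first pass from independent sets to arbitrary subsets of $N(B)$, giving $\sum_Y \prod_{B' \in Y} \mu(B') \leq \prod_{B' \sim B}(1+\mu(B'))$. Then I invoke the definition of lopsidependency: every $B' \sim B$ contains some $(i,j')$ with $(i,j) \in B$ and $j' \neq j$, so each factor $(1+\mu(B'))$ appears at least once in the triple product $\prod_{(i,j)\in B}\prod_{j'\neq j}\prod_{B' \ni (i,j')}(1+\mu(B'))$. Since all factors are $\geq 1$, the triple product dominates, yielding part~(1).

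For part~(2), I would expand the RHS as a sum $\sum_{(S,f)}\prod_{(i,j) \in S}\mu(f(i,j))$, where the pairs range over $S \subseteq B$ and $f\colon S \to \mathcal B$ with $f((i,j)) \ni (i,j')$ for some $j'$. The plan is to inject Pegden's LHS terms into distinct pairs $(S,f)$. For each independent set $Y$ of neighbors, choose for every $B' \in Y$ a variable of $B$ that $B'$ contains; because two elements of $Y$ sent to the same variable would share it and hence be adjacent, this gives an injective map $\phi_Y\colon Y \hookrightarrow B$, and the pair $(\phi_Y(Y), \phi_Y^{-1})$ contributes exactly $\prod_{B' \in Y}\mu(B')$. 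The isolated $\mu(B)$ term is matched by the pair with $S = \{(i_0,j_0)\}$ and $(i_0,j_0) \mapsto B$ for any fixed $(i_0,j_0) \in B$; this pair is distinct from every $(\phi_Y(Y), \phi_Y^{-1})$ because $B$ is not its own neighbor, so $B$ never appears in the image of any $\phi_Y$.

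\textbf{Main obstacle.} The only subtle step is the distinctness argument in part~(2): one must verify that different independent sets $Y$ produce different pairs $(\phi_Y(Y), \phi_Y^{-1})$, which holds because $Y$ is recoverable as the image of $\phi_Y^{-1}$, and that the $\mu(B)$-term cannot collide with any such pair, which uses $B \notin Y$. All terms in the RHS expansion are nonnegative, so the injection yields the desired inequality. Part~(1) reduces to straightforward bookkeeping once lopsidependency is brought in.
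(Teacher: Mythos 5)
Your proposal is correct and follows essentially the route the paper intends: the paper states this proposition without a written proof, presenting parts (1) and (2) as the LLLL criterion and Pegden's criterion (Theorem~\ref{pegden-thm}) "sorted by variables," which is exactly the reduction you carry out, with the organization of neighbors by the variables of $B$ supplying the required domination of the Pegden sum. Your injection argument in part (2) and the observation in part (1) that every lopsidependent neighbor occurs at least once in the triple product (and that repeated factors only help since each is $\geq 1$) correctly fill in the details the paper leaves implicit.
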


Our criterion blends these two conditions and is stronger than either of them:
\begin{proposition}
\label{blend-crit}
If for all bad-events $B$ we have
$$
\mu(B) \geq P_{\Omega}(B) \Bigl[ \mu(B) + \prod_{(i,j) \in B} (1 + \sum_{j' \neq j} \sum_{B' \ni (i,j')} \mu(B')) \Bigr]
$$
then the MT algorithm terminates with probability 1; the expected number of resamplings of $B$ is at most $\mu(B)$.
\end{proposition}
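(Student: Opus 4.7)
The plan is to derive Proposition~\ref{blend-crit} directly from Theorem~\ref{var-assignment-thm}. Concretely, assuming the hypothesis of Proposition~\ref{blend-crit}, I will verify the hypothesis of Theorem~\ref{var-assignment-thm} by showing
$$
\sum_{Y \text{ orderable to } B} \prod_{B' \in Y} \mu(B') \;\leq\; \mu(B) + \prod_{(i,j) \in B} \Bigl(1 + \sum_{j' \neq j} \sum_{B' \ni (i,j')} \mu(B')\Bigr),
$$
after which the conclusion of Theorem~\ref{var-assignment-thm} is exactly the desired statement.

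First, I split the orderable sum according to whether $Y$ satisfies (O1) or (O2). The (O1) contribution is precisely $Y = \{B\}$, which gives the $\mu(B)$ term on the right. For the (O2) contribution, I invoke the earlier proposition that every set $Y$ orderable to $B$ via (O2) is also assignable to $B$. Since ``orderable'' and ``assignable'' are set-theoretic properties of $Y$ and the product $\prod_{B' \in Y} \mu(B')$ depends only on $Y$ as a set, this yields
$$
\sum_{\substack{Y \text{ orderable to } B \\ \text{via (O2)}}} \prod_{B' \in Y} \mu(B') \;\leq\; \sum_{Y \text{ assignable to } B} \prod_{B' \in Y} \mu(B').
$$

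Next, I expand the product $\prod_{(i,j) \in B} \bigl(1 + \sum_{j' \neq j} \sum_{B' \ni (i,j')} \mu(B')\bigr)$. Each term in the expansion corresponds to choosing, for every $(i,j) \in B$, either the $1$ (no selection) or a specific bad-event $B' \ni (i,j')$ with $j' \neq j$; such data is precisely a subset $S \subseteq B$ together with a function $g\colon S \to \mathcal B$ satisfying $g(i,j) \ni (i,j')$ for some $j' \neq j$, and the corresponding term has weight $\prod_{(i,j) \in S} \mu(g(i,j))$. Given any assignable $Y$ with witnessing injection $f\colon Y \to B$, the pair $S = f(Y)$, $g = f^{-1}$ is of exactly this type, and contributes weight $\prod_{B' \in Y} \mu(B')$. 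Thus every assignable $Y$ appears at least once in the expansion, yielding the required bound.

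The only subtlety — and therefore the ``main obstacle,'' though it is mild — is the bookkeeping at the last step: an assignable $Y$ may admit several witnessing injections $f$, and conversely an expansion term need not have injective $g$. Both phenomena only inflate the right-hand sum, so since we are proving an upper bound they are harmless; what matters is that each assignable $Y$ is represented at least once, which the injection from the definition of assignability guarantees. Combining the three inequalities verifies the hypothesis of Theorem~\ref{var-assignment-thm}, and the termination and per-bad-event resampling bound follow immediately.
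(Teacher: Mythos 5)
Your proposal is correct and takes essentially the same route as the paper: reduce to Theorem~\ref{var-assignment-thm}, pass from orderable sets to assignable sets via the earlier proposition, and observe that expanding the product $\prod_{(i,j)\in B}(1+\sum_{j'\neq j}\sum_{B'\ni(i,j')}\mu(B'))$ dominates the resulting sum, with over-counting only helping since an upper bound is needed. One cosmetic point: your intermediate claim that \emph{every} assignable $Y$ appears in the expansion fails for the degenerate case $Y=\{B\}$ (assignable by fiat, with no witnessing injection), but that set is already covered by the separate $\mu(B)$ term and is never orderable via (O2) for a non-contradictory $B$, so the chain of inequalities you actually need is intact.
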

\begin{proof}
For the bad-event $B$, we have the criterion
$$
\mu(B) \geq P_{\Omega}(B) \sum_{\substack{\text{$Y$ assignable}\\\text{to $B$}}} \prod_{B' \in Y} \mu(B')
$$
We enumerate the assignable sets $Y$ as follows. First, we may take $Y = \{B \}$; this accounts for the term $\mu(B)$ in the RHS of Proposition~\ref{blend-crit}. Next, for each variable $(i,j) \in B$, we may select either zero or one bad-event $B' \ni (i, j')$ for some $j'  \neq j$. These account for respectively the terms $1$ and $\sum_{j' \neq j} \sum_{B' \ni (i,j')} \mu(B')$ in Proposition~\ref{blend-crit}.
\end{proof}
It is in this sense that we view our criterion as being stronger than the original MT lopsidependency criterion and stronger than Pegden's criterion.

\section{Parallel algorithm for MT}
\label{parallel-mt}
In \cite{moser-tardos}, a simple parallel algorithm was introduced which is based on the MT algorithm:
\begin{enumerate}
\item[1.] Draw all variables from $\Omega$
\item[2.] While there is some true bad-event, repeat the following:
\begin{enumerate}
\item[2a.] Select a maximal independent set $I$ of true bad-events
\item[2b.] Resample, in parallel, all bad-events in $I$.
\end{enumerate}
\end{enumerate}

This algorithm depends on the fact that, in the standard MT framework, bad-events which are unconnected do not share any variables. Hence they do not interact in any way and can be resampled in parallel. This is no longer the case for the lopsidependent MT algorithm; so in that case, frustratingly, we do not have any corresponding parallel algorithms.\footnote{In \cite{det-lll}, there is a brief discussion about a parallel deterministic algorithm for the variable-assignment LLLL. However, no algorithm is provided, nor are there any definite claims made for its performance.}

In this section, we introduce a new parallel algorithm corresponding to the lopsidependent MT setting, which achieves our new criterion up to a multiplicative slack. We assume that each bad-event uses at most $M \leq \text{polylog}(n)$ terms.  We also suppose that the number of bad-events is polynomially bounded, although this can be relaxed quite a bit. Finally, we require a multiplicative slack in the LLLL criterion.

Here is the basic idea. Suppose we have a large number of bad-events which are currently true. Due to the LLLL criterion, there may be many ``unconnected'' bad-events which are simultaneously true, yet they intersect in variables and cannot be resampled in parallel. However, suppose we resample a given variable; with good probability, it will change its value, thereby falsifying \emph{all} of the bad-events which contain it, even those we did not explicitly resample.

This argument can break down if we have $p_{ij} \approx 1$ for any variable $i$ and value $j$ (this situation is rare; see Section~\ref{sec:Ramsey} for an example). In that case, resampling the variable $i$ a single time is not likely to flip its value; we must resample it multiple times. Much of the complication of our parallel algorithm comes from dealing with this somewhat pathological case.  We will begin by stating a simple parallel algorithm which assumes $p_{ij} < 1 - \Omega(1)$ for all $i,j$; we then modify it to remove this condition.
\subsection{The parallel algorithm: warm-up exercise}
We present the following Parallel Moser-Tardos Algorithm (Simplified):
\begin{enumerate}
\item[1.] Draw all variables independently from the distribution $\Omega$.
\item[2.] While there is some true bad-event, repeat the following for rounds $t = 1, 2, \dots, $:
\begin{enumerate}
\item[3.] Let $V_{t,1}$ be the set of bad-events which are true at the beginning of round $t$.
\item[4.] Repeat the following for a series of sub-rounds $s = 1, 2, \dots, $ until $V_{t,s} = \emptyset$.
\begin{enumerate}
\item[5.] Select a maximal disjoint set $I_{t,s} \subseteq V_{t,s}$.  (This can be done using a parallel MIS algorithm).
\item[6.] Resample all $B \in I_{t,s}$.
\item[7.] Update $V_{t,s+1}$ as:  $V_{t,s+1} = V_{t,s} - I_{t,s}  - \text{All bad events which are no longer true}$.
\end{enumerate}
\end{enumerate}
\end{enumerate}

\begin{theorem}
Suppose that we satisfy the condition 
$$
\forall B \in \mathcal B, \mu(B) \geq P_{\Omega}(B) (1 + \epsilon) \sum_{\substack{\text{$Y$ orderable}\\\text{to $B$}}} \prod_{B' \in Y} \mu(B')
$$

Suppose further that we satisfy the condition 
$$
\forall i,j, P_{\Omega}(X_i = j) < 1 - \psi
$$

Let us define
$$
W = \sum_{B \in \mathcal B} \mu(B)
$$

Then whp the Parallel Moser Tardos (Simplified) terminates in time $\psi^{-1} \epsilon^{-1} \log W \log^{O(1)} ( n m )$ using $(n m)^{O(1)}$ processors.
\end{theorem}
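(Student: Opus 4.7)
The running time decomposes as $(\text{rounds }T)\times(\text{sub-rounds per round})\times(\text{time per sub-round})$. Each sub-round costs $\log^{O(1)}(nm)$ time on $(nm)^{O(1)}$ processors because selecting the maximal variable-disjoint set $I_{t,s}$ reduces to a parallel MIS on the variable-intersection graph of $V_{t,s}$, which Luby's algorithm solves in $\log^{O(1)}(nm)$ time. My plan is to prove the two high-probability bounds $T=O(\epsilon^{-1}\log W)$ and $\max_t S_t=O(\psi^{-1}\log(nm))$, where $S_t$ denotes the number of sub-rounds in round $t$; multiplying them together yields the theorem.

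\paragraph{Sub-round bound.} Fix a round $t$ and any $B\in V_{t,1}$. If $B\in I_{t,s}$ then $B$ leaves $V_{t,s+1}$ outright; otherwise the maximality of the variable-disjoint set $I_{t,s}$ forces a variable $i$ to lie in both $B$ and some $B'\in I_{t,s}$, and since $B$ and $B'$ are simultaneously true they must agree on the required value $j$ for $X_i$. The resampling of $B'$ draws $X_i$ afresh from $\Omega$, and the hypothesis $P_\Omega(X_i=j)<1-\psi$ gives that $X_i$ moves off $j$ and thereby falsifies $B$ with probability at least $\psi$. Hence $\Pr[B\in V_{t,s+1}\mid B\in V_{t,s}]\le 1-\psi$, which iterates to $\Pr[B\in V_{t,s^\star+1}]\le(1-\psi)^{s^\star}$. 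Taking $s^\star=\Theta(\psi^{-1}\log(nm/\delta))$ for $\delta=1/\mathrm{poly}(nm)$ and union-bounding over the $\le m$ events in $V_{t,1}$ and over the $T$ rounds completes this bound.

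\paragraph{Round bound.} I will linearise the parallel execution log by round, then sub-round, then arbitrarily within a sub-round, and build witness trees against this sequential order using the construction of Section~\ref{new-mt-proof}. The central structural claim is that a witness tree $\hat\tau$ associated to a resampling in round $t$ has depth at least $t$. Since $B\in V_{t,1}$ is true at the start of round $t$, tracing the variable of $B$ whose most recent prior update is latest exhibits an event $B^{(1)}$ resampled in round $t-1$ whose resampling changed that variable to $B$'s required value, so $B^{(1)}\sim B$. Iterating produces a causal chain $B=B^{(0)},B^{(1)},\dots,B^{(t-1)}$ with $B^{(k+1)}\sim B^{(k)}$ and $B^{(k)}$ resampled in round $t-k$. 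Combining condition (O2) with Proposition~\ref{update-tree-prop} and the ``deepest eligible'' placement rule, I will show by induction on $k$ that $B^{(k)}$ is placed at depth at least $k+1$ in $\hat\tau$ (either directly below $B^{(k-1)}$ or, if that spot is blocked by accumulated sibling structure, strictly below it in the subtree of $B^{(k-1)}$), giving $\mathrm{depth}(\hat\tau)\ge t$. I will then bound the expected number of depth-$\ge t$ witness trees by summing weights of depth-$\ge t$ chains and exploiting the $(1+\epsilon)$ slack to obtain geometric decay in $t$; by Lemma~\ref{witness-tree-lemma} and Proposition~\ref{unique-prop}, choosing $t=\Theta(\epsilon^{-1}\log W)$ drives $\Pr[T\ge t]$ below any desired inverse polynomial.

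\paragraph{Main obstacle.} The hardest step is establishing the depth lower bound and its quantitative tail bound on $T$. The orderable framework allows children of a common parent to share variables, so one must argue carefully using Propositions~\ref{tree-properties-prop} and~\ref{update-tree-prop} that, even after $B^{(k)}$'s node has accumulated unrelated children during the backward traversal, $B^{(k+1)}$ remains eligible somewhere in the subtree rooted at $B^{(k)}$; the crucial point is that the $\sim$-edge between consecutive chain elements survives every extension of the orderable child-set of $B^{(k)}$. Converting the depth bound into a geometric tail also requires care: a direct per-level induction would pick up a $|Y|\le M$ combinatorial factor at every level, so the cleanest route is to sum weights along designated causal chains rather than over entire subtrees, amortising the $M\le\mathrm{polylog}(n)$ bound against the $(1+\epsilon)$ slack so that only an extra $\log^{O(1)}(nm)$ factor leaks into the running time.
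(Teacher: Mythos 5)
Your decomposition (rounds $\times$ sub-rounds $\times$ per-sub-round MIS cost) and both high-probability bounds are exactly the paper's: the sub-round argument (maximality of the disjoint set forces every surviving $B\in V_{t,s}$ to share a variable with some resampled $B'\in I_{t,s}$; since both are simultaneously true they agree on that variable's value, which then switches with probability at least $\psi$) is the paper's argument, and the round bound via ``a witness tree for a round-$t$ resampling has height $t$'' plus the $(1+\epsilon)$ slack is also the paper's route (the paper itself only sketches this step here, deferring the height claim to Proposition~\ref{height-prop} for the full algorithm).

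Two corrections to your plan for the height bound. First, the assertion that ``the $\sim$-edge between consecutive chain elements survives every extension of the orderable child-set of $B^{(k)}$'' is false as stated: orderability is not monotone under adjoining an element that merely disagrees with the parent on some variable, because that variable may already be consumed by existing children (a parent $\{(1,a)\}$ with child $\{(1,b)\}$ cannot also accept $\{(1,c)\}$). The repair, which is what Proposition~\ref{height-prop} actually does, is a downward induction on rounds showing that \emph{every} node at depth $d$ is labeled by an event resampled in round $t-d+1$; this rests on the observation that events resampled within a single round are all true at its start, hence agree on all shared variables and are never eligible as children of one another, so each round contributes exactly one layer --- and if the slot below $B^{(k-1)}$ is blocked, the blocking children themselves already witness the required depth. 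Second, your worry about a $|Y|\le M$ combinatorial factor per level is unfounded: the standard induction gives $\sum_{\tau\ \text{rooted at}\ B}(1+\epsilon)^{|\tau|}w(\tau)\le\mu(B)$ under the slack hypothesis, and a tree of height $t$ has at least $t$ nodes, so $P(\text{round $t$ is reached})\le W(1+\epsilon)^{-t}$ with no leakage; the chain-summing workaround is unnecessary, and on its own would not bound the probability, since Lemma~\ref{witness-tree-lemma} and Proposition~\ref{unique-prop} apply to whole trees rather than to root-to-leaf paths.
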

\begin{proof}
We provide only a sketch, as we will later introduce a more advanced algorithm. In each round $t,s$, note that every bad-event $B \in V_{t,s}$ contains some resampled variable. Such a variable switches to a new value with probability $\geq \psi$, in which case $B$ is removed from $V_{t,s+1}$. Hence the expected size of $V_{t,s}$ is decreasing as $(1 - \psi)^s$. So, for $s = \Omega(\psi^{-1} \log n)$, we have $V_{t,s} = \emptyset$ and the round $t$ is done.

Next, suppose that a bad-event is resampled in round $t$. One can show that the witness tree for this resampling must have height $t$ exactly. One can also compute the total weight of all such trees, and show that this is decreasing as $(1+\epsilon)^{-t}$. Taking a union-bound over such trees, this implies that the probability of having $\geq t$ rounds is at most $(1+\epsilon)^{-t} W$.

This implies that, whp, our algorithm requires $\psi^{-1} \epsilon^{-1} \log^{O(1)} n \log W $ rounds. In each round, one must select an MIS of the currently-true bad-events, which takes at most $\log^{O(1)} m$ time and $m^{O(1)}$ processors.
\end{proof}

It will take a lot more work to drop the dependency of the running time on $\psi$. To do this, we will need to resample a variable multiple times in the round. Thus, we must replace the step of selecting a maximal disjoint set of bad-events with a maximal set in which no variable occurs too many times (which depends on its probabilities $p_i$). We must also deal with the possibility that we get inconsistent results when we resample a variable multiple times in a round.

Before we move on to the general case, we will need a subroutine to solve a problem we refer to as the \emph{vertex-capacitated maximal edge packing problem} (VCMEP). We believe this may be a useful building block for other parallel algorithms.

\subsection{Vertex-capacitated maximal edge packing}

\begin{Definition}
Suppose we are given a hypergraph $G$, with $m$ edges of size $\leq k$, on a vertex set $V$. For each $v \in V$, we are given a \emph{capacity} $C_v$ in the range $\{0, \dots, m \}$. We wish to select a subset $L \subseteq E$ of the edges, with the property that each vertex appears in at most $C_v$ edges of $L$, and such that $L$ is a maximal subset of $E$ with that property. Such a set $L$ is referred to as a \emph{vertex-capacitated maximal edge packing (VCMEP)}.
\end{Definition}

Such a set can be found easily by a sequential algorithm. Note that if $C_v = 1$ for all $v$, this is equivalent to finding a maximal independent set of the line graph of $G$.

\begin{theorem}
There is parallel algorithm to find a VCMEP in time $k \times \log^{O(1)} (mn)$.
\end{theorem}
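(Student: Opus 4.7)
The plan is to adapt Luby-style parallel maximal independent set to handle vertex capacities. I would maintain an accumulating set $L$ of committed edges and a set $A$ of still-active edges, initializing $L = \emptyset$, $A = E$, and residual capacities $c_v := C_v$ for every vertex $v$. In each parallel round I would: (i) draw an i.i.d. uniform priority $\pi(e) \in [0,1]$ for every $e \in A$; (ii) for each vertex $v$, sort its incident active edges by priority and mark the top $c_v$ of them as $v$-\emph{preferred} (or all of them, if fewer than $c_v$ remain); (iii) admit to $L$ every edge $e \in A$ that is $v$-preferred for all of its at-most-$k$ endpoints; (iv) update residual capacities by decrementing $c_v$ by the number of newly admitted edges containing $v$; and (v) for each vertex that has reached $c_v = 0$, remove from $A$ all edges containing it. The step (v) removals are precisely the edges that can no longer be added without violating capacity, so maximality is preserved throughout, and the eventual output $L$ is a VCMEP because once $A = \emptyset$, any $e \in E \setminus L$ must contain a saturated vertex.

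The main obstacle is the progress analysis: one needs a capacitated analogue of Luby's key lemma, showing that in each round a $\Omega(1/k)$-fraction of $|A|$ is resolved (either admitted to $L$ or deleted by step (v)) in expectation. My plan for this is a charging argument. For each surviving edge $e$, select a ``witness vertex'' $v \in e$ at which $e$ failed to be preferred; the local picture at $v$ is that of a one-dimensional top-$c_v$ selection among random priorities, which can be shown via a straightforward probabilistic calculation to admit or saturate a constant fraction of the active edges incident to $v$ on average. Because each edge has at most $k$ vertices that could block it, dividing the per-vertex progress among potential charging witnesses costs at most a factor of $k$. Standard concentration then turns the expected $\Omega(1/k)$ shrinkage per round into whp termination within $O(k \log m)$ rounds.

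For the per-round depth and work: the sort at each vertex is $\log^{O(1)}(mn)$ depth on $(mn)^{O(1)}$ processors, the $k$-way agreement test for each edge (checking whether it is $v$-preferred for every one of its up to $k$ vertices) is an $O(\log k)$-depth reduction, and the capacity and membership updates are similarly polylog in depth. Combining this $\log^{O(1)}(mn)$ per-round cost with the $O(k \log m)$ rounds yields the claimed $k \cdot \log^{O(1)}(mn)$ overall parallel time on $(mn)^{O(1)}$ processors. Correctness and maximality follow from the two structural observations above: (a) $L$ never exceeds capacity because step (ii)--(iii) limits each vertex's contribution by $c_v$, and (b) no candidate edge is ever discarded from $A$ while still being appendable, so $L$ is maximal upon termination.
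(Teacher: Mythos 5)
Your algorithm is a reasonable design and its correctness/maximality invariants are fine (each vertex admits at most its residual capacity per round since admitted edges at $v$ are a subset of the at-most-$c_v$ preferred ones, and upon termination every unadmitted edge contains a saturated vertex). The gap is in the progress lemma, which is the entire technical content of the theorem. Your charging argument rests on the claim that ``the local picture at $v$ is a one-dimensional top-$c_v$ selection \ldots which admits or saturates a constant fraction of the active edges incident to $v$.'' That is not true: an edge being $v$-preferred says nothing about whether it is admitted, because admission requires being preferred at \emph{all} of its up to $k$ endpoints simultaneously. A vertex $v$ can easily have every one of its $c_v$ preferred edges blocked at some other endpoint (e.g., $k=2$, $C\equiv 1$, $v$ of degree $1$ whose unique edge meets a high-degree vertex $u$: with probability $1-1/\deg(u)$ nothing is admitted at $v$ and $v$ is not saturated). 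So the per-vertex calculation cannot be done in isolation; you are forced into exactly the cross-endpoint interaction that makes Luby-type analyses delicate, and it is not established (and not obvious) that the quantity $|A|$ itself shrinks by a $(1-\Omega(1/k))$ factor per round under capacities. In the uncapacitated case the known analyses control a different potential (the number of conflicting \emph{pairs} of active edges, i.e., edges of the line graph), not the number of active edges, and it is unclear what the right capacitated analogue is. As written, the round bound $O(k\log m)$ is unsupported.

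The paper sidesteps this difficulty entirely by taking a different route: it writes the vertex-capacitated \emph{maximum} edge packing on the residual instance as a positive LP, solves it to within a constant factor in parallel using the Luby--Nisan algorithm, and rounds the fractional solution by selecting each edge with probability $x'_f/(2k)$ and de-selecting around violated vertices; a Markov-inequality argument shows each edge survives with probability at least $x'_f/(4k)$. The potential $\Phi_i = M_i - |L_i|$ (the gap to the optimum packing extending the current solution) then contracts by a $(1-\Omega(1/k))$ factor per round, and $\Phi_i=0$ certifies maximality. The advantage of that approach is precisely that progress is measured against an LP optimum rather than against $|A|$, so no combinatorial lemma about priority conflicts is needed. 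If you want to salvage your Luby-style scheme, you would need to identify and analyze a potential function (some capacitated analogue of the number of conflicting edge pairs) for which per-round contraction can actually be proved; the per-vertex argument you sketch does not do this.
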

\begin{proof}
We will repeatedly add edges until we have reached such a maximal set. At round $i$, suppose we have selected so far edges $L_i$, and we begin with $L_0 = \emptyset$.

Now form the residual graph and residual capacities; we abuse notation so that these are also denoted $G, C$.  One can form an integer program corresponding to the vertex-capacitated \emph{maximum} edge packing (i.e. packing of highest cardinality) for the residual. We let $M_i$ denote the size of the maximum packing which can be obtained by extending $L_i$. This integer program has variables $x_f$ corresponding to each edge $f \in G$, along with constraints that $\sum_{v \in f} x_f \leq C_v$ for each vertex $v$. Now relax the integer program to a positive linear program. As shown by \cite{luby-nisan}, there is a parallel algorithm running in time $\log^{O(1)} (n+m)$ which can find a solution $x'$ which is at least $(1 - \epsilon)$ times the optimum solution, where $\epsilon > 0$ is some sufficiently small constant. In turn, this solution is at least $(1 - \epsilon) (M_i - |L_i|)$.

We now round this fractional solution $x'$ as follows: each edge is selected with probability $x'_f/(2 k)$; if any vertex constraint $v$ is violated, then all edges containing $v$ are de-selected.  We define $L_{i+1}$ to be $L_i$ plus any selected edges.

Define the potential function $\Phi_i = M_i - |L_i|$.  Note that if $\Phi_i = 0$, then $L_i$ must be a maximal set of edges. We claim that, conditional on the state at the beginning of round $i$, we have $\bE[\Phi_{i+1}] \leq (1 - \Omega(1/k)) \Phi_i$.

For, consider some edge $f$; it is selected with probability $x'_f/(2k)$; suppose we condition on that event. Consider any vertex $v \in f$. The expected number of times that other edges incident to $v$ are selected is $\sum_{f' \ni v, f' \neq f} x'_f/(2k) \leq (C_v - x'_f)/(2k)$. By Markov's inequality, the probability that the actual number exceeds $C_v$, in which case $f$ is de-selected, is at most $\frac{C_v - x'_f}{2 k C_v}$. Hence the total probability that $f$ is selected is at least
\begin{align*}
P(\text{$f$ selected}) &\geq \frac{x'_f}{2 k}( 1 - \sum_{v \in f} \frac{C_v - x'_f}{2 k C_v} ) \geq \frac{x'_f}{2 k}( 1 - k \times \frac{1}{2 k}) \geq \frac{x'_f}{4 k}
\end{align*}

Summing over all such edges, we have that 
\begin{align*}
\bE[ \Phi_{i+1} ]&= \bE[ M_{i+1}] - \bE[L_{i+1}] \\
&\leq M_i - |L_i| - \sum_{f} \frac{x'_f}{4 k}  \\
&\leq M_i - |L_i| - (\frac{(1-\epsilon) (|M_i| - L_i)}{4 k})  \\
&\leq \Phi_i (1 - \Omega(1/k))
\end{align*}

Hence, for $i \geq \Omega(k \log (mn))$, we have $\bE[\Phi_i] \leq n^{-\Omega(1)}$. This implies that $\Phi_i = 0$ with high probability, which 
in turn implies that $L_i$ is a maximal packing with high probability.
\end{proof}

\subsection{The parallel algorithm}
\label{full-parallel-section}
We now present our full parallel algorithm. We will suppose that each bad-event uses at most $M$ terms. We also suppose that the number of bad-events is polynomially bounded, although this can be relaxed quite a bit. Finally, we suppose there is a slack in the LLL condition,
$$
\forall B \in \mathcal B, \mu(B) \geq P_{\Omega}(B) (1 + \epsilon) \sum_{\substack{\text{$Y$ orderable}\\\text{to $B$}}} \prod_{B' \in Y} \mu(B')
$$

\begin{enumerate}
\item[1.] Draw all variables independently from the distribution $\Omega$.
\item[2.] While there is some true bad-event, repeat the following for rounds $t = 1, 2, \dots, $:
\begin{enumerate}
\item[3.] Let $V_{t,1}$ be the set of bad-events which are true at the beginning of round $t$. Let $a_i$ be the value of variable $X_i$ at the beginning round $t$. Note that each bad-event in $V_{t,1}$ is a conjunction of terms $X_i = a_i$. For notation throughout the rest of this algorithm, for each variable $i$ let $q_i = P_{\Omega}(X_i \neq a_i)$.
\item[4.] Repeat the following for a series of sub-rounds $s = 1, 2, \dots, $ until $V_{t,s} = \emptyset$.
\begin{enumerate}
\item[5.] View $V_{t,s}$ as a hypergraph, whose vertices correspond to variables and whose hyper-edges correspond to bad-events. For each variable $i$, define the capacity $C_i = \lceil \frac{1}{M q_i} \rceil$. Find a VCMEP $I_{t,s} \subseteq V_{t,s}$.
\item[6.] For each $B \in I_{t,s}$ and each variable $i \sim B$, draw a resampling value $x_{B,i}$ from its distribution in $\Omega$. This represents that \emph{if} we decide to resample $B$, then we will choose to set variable $X_i$ equal to $x_{B,i}$. 
\item[7.] For each $B \in I_{t,s}$ choose a random  $\rho(B)$ independently from the real interval $[0,1]$. We think of $\rho(B)$ as the priority of $B$; we will resample the bad-events in the order of increasing $\rho$. Construct the undirected graph $G_{t,s}$ whose vertices correspond to elements of $I_{t,s}$, and where there is an edge from $B_1$ to $B_2$ if $\rho(B_1) < \rho(B_2)$ and $B_1, B_2$ both share a variable $i$ and we have $x_{B_1,i} \neq a_i$.
\item[8.] Find the lexicographically-first MIS (LFMIS) $I'_{t,s} \subseteq I_{t,s}$ of the graph $G_{t,s}$, with respect to the order $\rho$. (We will say more about this step later)
\item[9.] For each variable $X_i$, if there is some $B \in I'_{t,s}$ with $x_{B,i} \neq a_i$, set $X_i = x_{B,i}$ (by the way that $G_{t,s}$ is constructed, there can be at most one such $B$ for each variable $i$); we say such that variable $i$ is \emph{switched}; otherwise leave $X_i = a_i$. 
\item[10.] Update $V_{t,s+1}$ as $V_{t,s+1} = V_{t,s} - I_{t,s} - \text{All bad events containing a switched variable}$.
\end{enumerate}
\end{enumerate}
\end{enumerate}

This algorithm is quite intricate to analyze. There are two main parts to the proof: showing that the number of rounds is small, and showing that each round can be executed quickly.

\subsection{Bounding the number of rounds}
The key to showing that the number of rounds is small, is to show that this parallel algorithm is simulating a version of the sequential MT algorithm.

\begin{proposition}
\label{witness-tree-lemma2}
Consider the following sequential algorithm, which is a variant of the MT algorithm with an unusual rule for selecting which bad-event to resample:

\begin{enumerate}
\item[1.] Draw all variables independently from the distribution $\Omega$.
\item[2.] While there is some true bad-event, repeat the following for rounds $t = 1, 2, \dots, $:
\begin{enumerate}
\item[3.] Let $V_{t,1}$ be the set of bad-events which are true at the beginning of round $t$. Let $a_i$ be the value of variable $X_i$ at the beginning round $t$.
\item[4.] Repeat the following for a series of sub-rounds $s = 1, 2, \dots, $ until $V_{t,s} = \emptyset$.
\begin{enumerate}
\item[5.] View $V_{t,s}$ as a hypergraph, whose vertices correspond to variables and whose hyper-edges correspond to bad-events. For each variable $i$, define the capacity $C_i = \lceil \frac{1}{M q_i} \rceil$. Find a VCMEP $I_{t,s} \subseteq V_{t,s}$.
\item[6.] Select a random ordering $\pi_{t,s}$ of $I_{t,s}$. For $i = 1, \dots, |I_{t,s}|$ let $\pi_{t,s}(i)$ denote the $i^{\text{th}}$ element of $I_{t,s}$ in this ordering.
\item[7.] For $k = 1, \dots, |I_{t,s}|$ do the following:
\begin{enumerate}
\item[8.] If $\pi_{t,s}(k)$ is currently true, resample it.
\end{enumerate}
\item[9.] Update $V_{t,s+1}$ as $V_{t,s+1} = V_{t,s} - I_{t,s} - \text{All bad events containing a switched variable}$.
\end{enumerate}
\end{enumerate}
\end{enumerate}

Let $X_{i,t,s}$ denote the value of variable $i$ after round $t$ and sub-round $s$. Then the random variables $X_{i,t,s}$ have the same distribution for the parallel algorithm and for this sequential resampling algorithm.
\end{proposition}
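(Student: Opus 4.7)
The plan is to establish the distributional equality by constructing an explicit coupling under which, sub-round by sub-round, both algorithms trace the same sequence of values for every $X_i$. Within each sub-round $(t,s)$ I would share three pieces of randomness: the internal random bits of the VCMEP subroutine, so both algorithms see the same $I_{t,s}$; a family of independent continuous uniforms $\rho(B)\in[0,1]$ for $B\in I_{t,s}$, which the parallel algorithm uses directly and the sequential algorithm uses by taking $\pi_{t,s}$ to be the ordering of $I_{t,s}$ by increasing $\rho$ (uniformly random, as required); and pre-drawn values $x_{B,i}$, which the parallel algorithm uses directly and the sequential algorithm uses as the output of the resample of $B$ whenever $B$ is resampled. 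Pre-drawing is distributionally innocuous because the sequential resamples are independent of the history by definition; unused $x_{B,i}$ (for $B\in I_{t,s}$ that are not resampled sequentially) simply go to waste.

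Granted this coupling, I would prove by induction on $(t,s)$ that the vectors $(X_i)_i$ agree at the end of every sub-round. The inductive step reduces to a sub-lemma: the set $R_{t,s}$ of bad-events actually resampled by the sequential algorithm equals the LFMIS $I'_{t,s}$ chosen by the parallel algorithm. I would prove this by a secondary induction in $\rho$-order over $I_{t,s}$. For $B\in I_{t,s}$, all variables of $B$ start sub-round $s$ at the values $a_i$, so the only way $B$ can be falsified before the sequential algorithm reaches it is that some earlier-$\rho$ resampled $B'$ shares a variable $i$ with $B$ and has $x_{B',i}\neq a_i$; by the secondary inductive hypothesis this is the same as $B'\in I'_{t,s}$ with $x_{B',i}\neq a_i$. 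But by the asymmetric definition of edges in $G_{t,s}$ (an edge with $\rho(B')<\rho(B)$ exists exactly when they share a variable $i$ with $x_{B',i}\neq a_i$), this is precisely the condition ``$B$ has an earlier neighbor in $I'_{t,s}$''. Hence $B$ is resampled sequentially iff $B\in I'_{t,s}$.

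The hard part, and the point where the asymmetric edge condition in $G_{t,s}$ is indispensable, is to verify that both algorithms assign the same final value to each individual $X_i$ at the end of the sub-round. First I would observe that at most one $B\in I'_{t,s}$ can have $x_{B,i}\neq a_i$, since any two such would be joined by an edge through variable $i$ (with the smaller-$\rho$ one supplying the required $x\neq a_i$), contradicting the independent-set property; hence the parallel update is well-defined. In the sequential algorithm, the first resampled $B\ni i$ with $x_{B,i}\neq a_i$ sets $X_i:=x_{B,i}$, after which any later $B'\ni(i,a_i)$ in $I_{t,s}$ is false and not resampled, so $X_i$ undergoes at most one change per sub-round. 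Using $R_{t,s}=I'_{t,s}$, this first-switching $B$ is exactly the unique member of $I'_{t,s}$ responsible for switching variable $i$ in the parallel update, and so both processes leave $X_i$ at the same value; this propagates the primary induction and yields the claimed distributional equality of $X_{i,t,s}$.
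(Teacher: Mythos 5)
Your proposal is correct and follows essentially the same route as the paper: the paper formalizes your coupling as an explicit ``hybrid'' algorithm $\mathcal H$ that pre-draws the $x_{B,i}$ and $\rho(B)$, invokes deferred decisions plus uniformity of the $\rho$-induced ordering to match the sequential algorithm, and then proves by induction in $\rho$-order that the set of events actually resampled equals the LFMIS $I'_{t,s}$, exactly as in your sub-lemma. Your final paragraph (uniqueness of the switching event per variable and the one-change-per-sub-round observation) just spells out details the paper leaves implicit.
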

\begin{proof}
We will show this by coupling the sequential and parallel algorithms. We consider the following hybrid algorithm which we denote $\mathcal H$.
\begin{enumerate}
\item[1.] Draw all variables independently from the distribution $\Omega$.
\item[2.] While there is some true bad-event, repeat the following for rounds $t = 1, 2, \dots, $:
\begin{enumerate}
\item[3.] Let $V_{t,1}$ be the set of bad-events which are true at the beginning of round $t$.
\item[4.] Repeat the following for a series of sub-rounds $s = 1, 2, \dots, $ until $V_{t,s} = \emptyset$.
\begin{enumerate}
\item[5.] View $V_{t,s}$ as a hypergraph, whose vertices correspond to variables and whose hyper-edges correspond to bad-events. For each variable $i$, define the capacity $C_i = \lceil \frac{1}{M q_i} \rceil$. Find a VCMEP $I_{t,s} \subseteq V_{t,s}$.
\item[6.] For each $B \in I_{t,s}$ and each variable $i \sim B$, draw a random variable $x_{B,i}$ independently from $\Omega$ and draw a random variable $\rho(B)$ independently from $[0,1]$.
\item[7.] Form a permutation $\pi$ of $I_{t,s}$ by sorting in increasing order of $\rho$. 
\item[8.] For $k = 1, \dots, |I_{t,s}|$ do the following:
\begin{enumerate}
\item[9.] If $\pi(k)$ is currently true, then for each variable $i \sim \pi(k)$, set $X_i = x_{\pi(k),i}$.
\end{enumerate}
\item[10.] Update $V_{t,s+1}$ as $V_{t,s+1} = V_{t,s} - I_{t,s} - \text{All bad events containing a switched variable}$.
\end{enumerate}
\end{enumerate}
\end{enumerate}

We first claim that $\mathcal H$ induces the same distribution on the random variables as the sequential algorithm. For, the permutation $\pi$ in $\mathcal H$ is clearly drawn uniformly from the set of permutations of $I_{t,s}$. Also, the algorithm $\mathcal H$ does not examine the variable $x_{\pi(k),i}$ in any way before step (9), and so by the principle of deferred decisions it is equivalent to draw $X_i$ independently from $\Omega$ instead of setting $X_i = x_{\pi(k), i}$.

We next claim that $\mathcal H$ induces the same distribution on the random variables as the parallel algorithm. This follows from the following stronger claim: suppose that we fix the random variables $x_{B,i}$ and $\rho$. Then the value of the variables $X$ is \emph{identical} in $\mathcal H$ and the parallel algorithm. To see this, consider some $B \in I_{t,s}$ with $\pi(k) = B$. Then a simple induction on $k$ shows that $B \in I'_{t,s}$ if and only if $B$ is true at stage $k$ of the loop (9) of $\mathcal H$.

\end{proof}

One may build witness trees for the resamplings of this sequential algorithm, as it is merely a variant of the usual MT algorithm.
\begin{proposition}
\label{height-prop}
Suppose that $B$ is resampled in round $t$. Then the witness tree corresponding to this resampling has height $t$.
\end{proposition}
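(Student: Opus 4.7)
The plan is to prove both bounds by induction on $t$, exploiting one central structural fact: all bad-events resampled in a common round $t$ belong to $V_{t,1}$ and therefore \emph{agree} with each other on every shared variable (each demanding $X_i = a_i^{(t)}$). Hence two distinct same-round events are never lopsidependent. For the upper bound, I use this to rule out any round-$t$ event $B' \neq B$ from being placed as a descendant of the root: (O1) would demand $B' = B$, while (O2) would require some $z \in B$ with $z \sim B'$, contradicting $B \not\sim B'$. Since the sequential algorithm resamples each bad-event at most once per round (the sets $I_{t,s}$ are disjoint across sub-rounds), two nodes in the tree sharing a label must sit in distinct rounds. Applied at every level of the tree, this forces child rounds to be strictly less than parent rounds, yielding height $\leq t$.

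For the lower bound, I would show by induction on $d$ that once the backward pass has finished processing every resampling in rounds $\geq t-d+1$, the partial tree contains a node at depth $d$; taking $d = t$ then finishes. The inductive step uses a ``cause'' argument: given a depth-$d$ node $v$ with label $B_v$ resampled in round $r := t-d+1 \geq 2$, since $B_v$ is true at the start of round $r$ while $V_{r-1,1}$ is emptied during round $r-1$, either (a) $B_v \in V_{r-1,1}$ was itself resampled in round $r-1$, making the earlier copy of $B_v$ eligible for $v$ via (O1) provided $v$ still has no children; or (b) some round-$(r-1)$ resampling $B^*$ is the \emph{last} one to flip a variable $i \in B_v$ to its required value, so $B^* \ni (i, v')$ with $v' \neq a_i^{(r)}$ while $B_v \ni (i, a_i^{(r)})$, giving $B^* \sim B_v$ and $\{B^*\}$ orderable to $B_v$ via (O2). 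The upper-bound inequality restricts any round-$(r-1)$ event's placement depth to at most $d+1$, so when the cause event is eligible for $v$ it is placed at depth exactly $d+1$.

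The main obstacle I anticipate is the bookkeeping in case (a). If by the time the earlier copy of $B_v$ is processed, $v$ has already acquired at least one child, then (O1) fails (its singleton requirement is violated) and (O2) is unavailable (no $z \in B_v$ satisfies $z \sim B_v$, because $B_v$ is internally consistent), so the earlier copy might not attach under $v$ at all. The saving observation is that in exactly this situation $v$ already has a depth-$(d+1)$ child supplied by some other eligible event, so the depth-$d+1$ milestone is reached either way; case (a) therefore resolves via a two-case split. A secondary check is that no round-$(r-1)$ event can slip \emph{below} depth $d+1$, which is immediate from the round-vs-depth inequality established in the upper bound. Combining the two bounds gives height exactly $t$.
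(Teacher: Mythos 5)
Your argument is correct and matches the paper's proof in all essentials: both rest on the observation that bad-events resampled in a common round agree on all shared variables (so a node's children must come from strictly earlier rounds, giving height $\leq t$), and both establish the lower bound by exhibiting, for a deepest node $v$ from round $r$, a round-$(r-1)$ ``cause'' eligible for $v$ either via (O1) (an earlier resampling of the same event) or via (O2) (a resampling that switched one of $v$'s variables). Your explicit treatment of the case where $v$ has already acquired a child when the cause is processed is a detail the paper passes over more quickly, but otherwise the two proofs coincide.
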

\begin{proof}
For each $t' \leq t$, let $\hat \tau_{(t')}$ denote the tree formed for the resampling of $B$ from round $t'$ onward (that is, we only add events in rounds $t', \dots, t$ inclusive to the witness tree).

 We will prove by induction the stronger claim: Suppose that $B$ is resampled in round $t$. Then for each $t' \leq t$, the tree $\hat \tau_{(t')}$ has height exactly $t - t' + 1$; furthermore, all the nodes at depth $t - t' + 1$ correspond to bad-events resampled at round $t'$. (Depth 1 corresponds to the root of the tree).

The base case of this induction is $t' = t$. In this case, note that all events resampled in round $t$ agree on all variables, and each bad-event $B \in \mathcal B$ is resampled at most once. Hence $\hat \tau_{(t)}$ consists consists of just a singleton node labeled by $B$.

We move on to the induction step. We begin with $\hat \tau_{(t')}$ and wish to extend it backward in time to round $t'- 1$. By induction hypothesis, $\hat \tau_{(t')}$ has height exactly $t - t' + 1$ and the nodes at depth $t - t' + 1$ correspond to bad-events resampled at round $t'$. 

Note first that all bad-events encountered in round $t' - 1$ are true at the beginning of that round. So they agree on all variables, which implies that they cannot be children of each other. This implies that the only possible nodes at depth $t'- t + 2$ in $\hat \tau_{(t'-1)}$ correspond to bad-events resampled in round $t'-1$ which have as their parent a node of depth $t' - t + 1$.  Thus, the height of $\hat \tau_{(t'-1)}$ is either $t - t' + 2$ (as we want to show), or is $t - t' + 1$.

By induction hypothesis, the tree $\tau_{(t')}$ contains some node $v$ labeled by $B'$ at height $t' - t + 1 $ resampled in round $t'$. 

First suppose $B'$ is true at the beginning of round $t'-1$, so $B' \in V_{t'-1,1}$.  Then either $B'$ is resampled in round $t'-1$, or $B'$ becomes false during round $t'-1$. In the first case, $B'$ would be eligible to be placed as a child of $v$, so it is either placed there or at some other position at the same depth; either way, $\hat \tau_{(t'-1)}$ would have height $t - t' + 2$. In the second case, it must be that $B'$ contains a variable which switched in round $t' - 1$. This implies that $B'$ remains false at the end of round $t' - 1$, so $B' \not \in V_{t',1}$; but $B'$ was resampled in round $t'$ so this is a contradiction.

Second suppose $B'$ is false at the beginning of round $t'-1$. It must have become true due to some variable $X_i$ switching in round $t'-1$ due to resampling some $B''$. But then $B''$ disagrees with $B'$ on variable $X_i$, so $B'' \sim B'$. As $v$ is a leaf node in $\hat \tau_{(t')}$, this implies that $B''$ would be eligible to placed as a child of $v$. Again, such $B''$ will be placed either as a child of $v$ or at some position at the same depth, so that $\hat \tau_{(t'-1)}$ would have height $t - t' + 2$.
\end{proof}

\begin{proposition}
The parallel algorithm terminates after $O(\frac{\log W}{\epsilon})$ rounds whp.
\end{proposition}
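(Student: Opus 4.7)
The plan is to reduce the claim to an analysis of witness trees and then apply a generating-function bound. By Proposition~\ref{witness-tree-lemma2}, the variables at the end of each sub-round are distributed identically in the parallel algorithm and in the sequential variant defined there; in particular the two algorithms take the same number of rounds (the stopping condition is purely a function of variable values). So I would analyze the sequential variant, to which the witness-tree construction of Section~\ref{new-mt-proof} applies unchanged. By Proposition~\ref{height-prop}, any bad-event resampled in round $t$ of the sequential variant produces a witness tree of height exactly $t$, so the event \emph{``more than $t$ rounds are needed''} is contained in the event \emph{``some witness tree of height $> t$ is generated.''}

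Next I would invoke Lemma~\ref{witness-tree-lemma} together with Proposition~\ref{unique-prop} (distinct resamplings yield distinct witness trees) to turn this into the union bound
$$
\Pr(\text{more than $t$ rounds}) \;\leq\; \sum_{B \in \mathcal{B}} R_t(B), \qquad R_t(B) := \sum_{\substack{\tau \text{ rooted at } B\\\mathrm{height}(\tau) > t}} w(\tau),
$$
since the expected number of observed trees of any given shape is at most $w(\tau)$.

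The main work, and the step I expect to be the main obstacle, is to show $R_t(B) \leq \mu(B)(1+\epsilon)^{-t}$. For this I would introduce the generating function $\phi_B(x) := \sum_{\tau \text{ rooted at } B} w(\tau) x^{|\tau|}$, which by the usual recursive decomposition of rooted trees satisfies
$$
\phi_B(x) \;=\; x\,P_\Omega(B) \sum_{\substack{Y \text{ orderable}\\\text{to } B}} \prod_{B' \in Y} \phi_{B'}(x).
$$
A simple induction on the truncation depth $h$ then shows $\phi_B^{(h)}(1+\epsilon) \leq \mu(B)$ uniformly in $h$: the base case uses that the slack criterion with $Y = \emptyset$ gives $(1+\epsilon)P_\Omega(B) \leq \mu(B)$, and the inductive step substitutes $\phi_{B'}^{(h-1)}(1+\epsilon) \leq \mu(B')$ into the recursion and reads off exactly the slack criterion for $B$, so the extra $x = 1+\epsilon$ factor is absorbed without remainder. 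Passing to the limit, $\phi_B(1+\epsilon) \leq \mu(B)$. Since every tree of height $> t$ has at least $t$ nodes, this yields $R_t(B) \leq \phi_B(1+\epsilon)(1+\epsilon)^{-t} \leq \mu(B)(1+\epsilon)^{-t}$.

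Summing over $B$ gives $\Pr(\text{more than $t$ rounds}) \leq W(1+\epsilon)^{-t}$, and taking $t = \Theta(\epsilon^{-1} \log(Wn))$ makes this polynomially small in $n$; since the paper assumes the number of bad-events is polynomially bounded (so $W$ is polynomial in $n$), this collapses to the claimed $O(\epsilon^{-1} \log W)$ rounds with high probability. The reason I expect the generating-function step to be the delicate part is that the naive inclusion--exclusion bound $\prod W(B') - \prod(W(B') - R_{t-1}(B'))$ used to express ``at least one child subtree of height $\geq t-1$'' generates a spurious multiplicative factor of $|Y|$ that the $(1+\epsilon)$ slack cannot absorb; the weighting $x^{|\tau|}$ sidesteps this by attaching the $x$-factor uniformly to every node rather than to a distinguished ``deep child.''
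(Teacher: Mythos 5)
Your proof is correct and follows essentially the same route as the paper: reduce to the sequential variant via the coupling proposition, use the height-$t$ property of witness trees from round $t$, and union-bound the total weight of trees of height exceeding $t$, which decays as $(1+\epsilon)^{-t}$ under the slack condition. The only difference is that you spell out the generating-function computation for that decay (where your recursion should strictly be an inequality ``$\leq$'', which is the direction you need anyway), whereas the paper simply cites Moser--Tardos for it.
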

\begin{proof}
By Proposition~\ref{witness-tree-lemma2}, it suffices to show that the sequential algorithm terminates after $O(\frac{\log W}{\epsilon})$ rounds whp.  In each round $t$, there is at least one resampling, which must correspond to some tree of height $t$. As shown in \cite{moser-tardos}, due to the slack condition the total weight of all such trees rooted in a bad-event $B$ is $O( \mu(B) (1+\epsilon)^{-t})$. Summing over all such $B$, this implies that for $t = \Omega(\log (n W) )$ this weight is $n^{-\Omega(1)}$. Hence whp no such trees appear.
\end{proof}

\subsection{Analyzing the run-time of individual rounds}
We next consider the individual steps that make up a round of the parallel algorithm. 
\begin{proposition}
The LFMIS $I'_{t,s}$ can be found whp in time $O(\frac{\log n}{\log \log n})$.
\end{proposition}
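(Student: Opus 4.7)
The plan is to use the standard parallel-LFMIS scheme: process the vertices of $I_{t,s}$ in order of increasing $\rho$, and in each parallel round decide all vertices whose earlier neighbors in $G_{t,s}$ have already been decided. The number of rounds is then the depth of the LFMIS dependency, which equals the length of the longest directed chain $B_1 \to B_2 \to \cdots \to B_k$ in $G_{t,s}$, namely a sequence with $\rho(B_1) < \rho(B_2) < \cdots < \rho(B_k)$ such that every consecutive pair $(B_j, B_{j+1})$ shares a variable $v_j$ with $x_{B_j,v_j} \neq a_{v_j}$. Each round can be executed in $O(1)$ time with $m^{O(1)}$ processors once dependencies are known, so the whole task reduces to bounding the longest such chain by $O(\log n/\log\log n)$ whp.

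The key step is to bound the expected number $N_k$ of chains of length $k$. Fix a specific sequence $(B_1, v_1, B_2, v_2, \dots, v_{k-1}, B_k)$ with $v_j \in B_j \cap B_{j+1}$. Because $\rho$ is drawn independently of the $x_{B,i}$, and the $x_{B,i}$ are independent across different pairs $(B,i)$, the probability that this sequence forms a valid chain is exactly $\tfrac{1}{k!} \prod_{j=1}^{k-1} q_{v_j}$: the factor $1/k!$ comes from the uniformly random ordering of $k$ distinct $\rho$-values, and $q_{v_j}$ is the probability that $x_{B_j, v_j} \neq a_{v_j}$.

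The central calculation is then to sum over sequences using the VCMEP capacity bound. For any fixed $B \in I_{t,s}$,
$$
\sum_{v \in B} q_v \cdot \#\{B' \in I_{t,s} : v \in B', \ B' \neq B\} \;\leq\; \sum_{v \in B} q_v (C_v - 1) \;\leq\; \sum_{v \in B} q_v \cdot \frac{1}{M q_v} \;\leq\; \frac{|B|}{M} \;\leq\; 1,
$$
where I use $C_v = \lceil 1/(M q_v) \rceil$ to get $C_v - 1 \leq 1/(M q_v)$, together with $|B| \leq M$. Iterating this inequality through the $k-1$ branching steps in the definition of $N_k$, and noting that the number of choices for $B_1$ is at most $m = n^{O(1)}$, I obtain $E[N_k] \leq m/k!$. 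Markov's inequality then gives $P(N_k \geq 1) \leq m/k!$, and setting $k = c \log n/\log\log n$ for a sufficiently large constant $c$ makes $k! \gg m$, so this probability is $o(1)$.

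Hence whp no chain of length exceeding $O(\log n/\log\log n)$ exists in $G_{t,s}$, so the parallel LFMIS terminates in $O(\log n/\log\log n)$ rounds. The only genuinely delicate point is verifying the independence structure used in the chain-probability computation: the $\rho(B)$ are mutually independent (making $1/k!$ exact), and the $x_{B_j,v_j}$ factors are independent across $j$ precisely because they correspond to distinct $(B,i)$ pairs. Beyond this bookkeeping, the round-by-round implementation of LFMIS is standard.
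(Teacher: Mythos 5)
Your proposal is correct and follows essentially the same route as the paper: bound the parallel LFMIS round count by the longest directed path in $G_{t,s}$, then bound the expected number of length-$k$ directed chains by combining the VCMEP capacity bound $\sum_{v \in B} q_v (C_v - 1) \leq 1$ (giving branching factor at most $1$ per step) with the $1/k!$ factor from the independent random priorities $\rho$. Your accounting of the factor $m$ for the choice of $B_1$ and of the independence of the $x_{B_j,v_j}$ across distinct pairs is, if anything, slightly more careful than the paper's sketch.
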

\begin{proof}
 In general, the problem of finding the LFMIS is P-complete \cite{lfmis-p}, hence we do not expect a generic parallel algorithm for this. However, what saves us it that the ordering $\rho$ and the graph $G_{t,s}$ are constructed in a highly random fashion.  This allows us to use a greedy algorithm to construct $I'_{t,s}$:

\begin{enumerate}
\item[1.] Let $H_1$  be the directed graph obtained by orienting all edges of $G_{t,s}$ in the direction of $\rho_{t,s}$. Repeat the following for $l = 1, 2, \dots,$:
\begin{enumerate}
\item[2.] If $H_l = \emptyset$ terminate.
\item[3.] Find all source nodes of $H_l$. Add these to $I'_{t,s}$.
\item[4.]  Construct $H_{l+1}$ by removing all source nodes and all successors of source nodes from $H_l$.
\end{enumerate}
\end{enumerate}
The output of this algorithm is the LFMIS $I'_{t,s}$. Each step can be implemented in parallel time $O(1)$. The number of iterations of this algorithm is the length of the longest directed path in $G_{t,s}$. So it suffices it show that, whp, all directed paths in $G_{t,s}$ have small length.
 
Suppose we select $B_1, \dots, B_l \in \mathcal B$ uniformly at random. Let us analyze how these could form a directed path in $G$. 

First, note that $B_2$ is a neighbor of $B_1$. Each variable $i \sim B_1$ appears in at most $C_i - 1$ other bad-events. If $B_1$ and $B_2$ intersect in variable $i$, then that variable $i$ creates an edge between $B_1, B_2$ only if $x_{B_1, i} \neq a_i$, which occurs with probability $q_i$.  Thus, for each $B_1$, the expected number of $B_2$ which are connected to that $B_1$ is at most
$$
\sum_{i \sim B_1} (C_i - 1) q_i \leq \sum_{i \sim B_1} 1/(M q_i) \times q_i \leq 1
$$
Continuing this way, we see that the expected number of $B_1, \dots, B_l$ which are connected is at most $1$.

Next, it must be the case that $\rho(B_1) < \rho(B_2) < \dots < \rho(B_l)$. So far, none of the probabilistic statements have referred to $\rho$, so the probability this occurs conditional on all previous events is $1/l!$.
  Thus, for $l \geq \Omega(\frac{\log n}{\log \log n})$, this is $n^{-\Omega(1)}$ as desired.
\end{proof}

\begin{proposition}
For $s = \Omega(M \log n)$, we have $V_{t,s} = \emptyset$ whp.
\end{proposition}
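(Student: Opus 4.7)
I plan to prove the geometric contraction bound
\[
\bE\bigl[|V_{t,s+1}| \mid V_{t,s}\bigr] \;\leq\; \bigl(1 - \Omega(1/M)\bigr)\, |V_{t,s}|,
\]
which iterated over $s = \Omega(M \log n)$ sub-rounds drives the expected size below $n^{-\Omega(1)}$; combined with $|V_{t,1}| \leq m = \mathrm{poly}(n)$, Markov's inequality then gives $V_{t,s} = \emptyset$ with high probability.

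The contraction reduces to a per-event bound. Fix $B \in V_{t,s}$ and consider $P(B \in V_{t,s+1})$. If $B \in I_{t,s}$ then $B$ is excised by step~10 and the bound is trivial. Otherwise, the maximality of the VCMEP $I_{t,s}$ forces some ``blocking'' variable $i \sim B$ with $|I_{t,s}^{(i)}| = C_i$, writing $I_{t,s}^{(i)} := \{B' \in I_{t,s} : i \sim B'\}$; if every variable in $B$ still had residual capacity one could enlarge the packing by adjoining $B$, contradicting maximality. Since switching $X_i$ removes $B$ from $V_{t,s+1}$, it suffices to show $P(X_i \text{ switched}) = \Omega(1/M)$.

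The first half of this lower bound is easy: each of the $C_i \geq 1/(M q_i)$ events in $I_{t,s}^{(i)}$ independently draws $x_{B',i}$ from $\Omega$, so the probability that at least one satisfies $x_{B',i} \neq a_i$ is
\[
1 - (1 - q_i)^{C_i} \;\geq\; 1 - e^{-q_i C_i} \;\geq\; 1 - e^{-1/M} \;=\; \Omega(1/M).
\]
The main obstacle is that this is only an existence statement for ``would-be switchers'': the LFMIS step could conceivably discard every such candidate from $I'_{t,s}$ in favor of higher-priority events conflicting via other variables. To turn ``wants to switch'' into ``does switch'', I plan to invoke the coupling of Proposition~\ref{witness-tree-lemma2} and work inside the sequential variant, which processes $I_{t,s}$ in a uniformly random order $\pi$. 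There each processed event in $I_{t,s}^{(i)}$ resamples $X_i$ via an independent fresh draw from $\Omega$ by deferred decisions, so $P(X_i \text{ switched}) \geq q_i \cdot P(\text{some } B' \in I_{t,s}^{(i)} \text{ is processed})$; an averaging argument over $\pi$, using the capacity bounds $C_j q_j \leq O(1)$ to control the expected number of pre-emptive switches elsewhere in $I_{t,s}$, then confirms $P(\text{some } B' \in I_{t,s}^{(i)} \text{ is processed}) = \Omega(1/(M q_i))$ and completes the bound.
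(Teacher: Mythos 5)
Your overall strategy coincides with the paper's: a per-sub-round contraction $\bE[|V_{t,s+1}|]\leq(1-\Omega(1/M))|V_{t,s}|$, obtained by using maximality of the VCMEP to find a saturated variable $i\sim B$ with $C_i$ occurrences in $I_{t,s}$, and using the capacity bound $C_j q_j=O(1)$ (summed over the $\leq M$ variables of a candidate event) to show that pre-emption by higher-priority events is unlikely. The paper runs this directly in the parallel algorithm (each $B'\in I_{t,s}$ lands in the LFMIS $I'_{t,s}$ with probability $\geq 1/2$, by Markov applied to the expected number of conflicting lower-$\rho$ events, which is $\leq\sum_{j\sim B'}(C_j-1)q_j/2\leq 1/2$); your plan to route through the sequential coupling of Proposition~\ref{witness-tree-lemma2} is an equivalent framing.

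However, your final combination step has a quantitative error that breaks exactly in the main case of interest. You claim $P(\text{some }B'\in I^{(i)}_{t,s}\text{ is processed})=\Omega(1/(Mq_i))$ and multiply by $q_i$. When $q_i<1/M$ (the only case where $C_i>1$ and saturation is nontrivial), the quantity $1/(Mq_i)$ exceeds $1$, so no probability can be $\Omega(1/(Mq_i))$; and since $P(\text{some processed})\leq 1$, the product $q_i\cdot P(\text{some processed})$ is at most $q_i$, which can be far smaller than $1/M$. The event ``at least one element of $I^{(i)}_{t,s}$ is processed'' does not capture the fact that you get roughly $C_i$ \emph{independent} chances at switching $X_i$. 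The correct accounting (and the paper's) is: each of the $C_i$ events $B'\in I^{(i)}_{t,s}$ is processed with probability $\geq 1/2$, this event is independent of the fresh draw $x_{B',i}$, so the \emph{expected number} of $B'$ that are both processed and have $x_{B',i}\neq a_i$ is at least $C_i\cdot q_i/2\geq 1/(2M)$; since at most one such $B'$ can exist (the first one to act switches $X_i$ off $a_i$, falsifying all later candidates containing $i$), this expectation equals the switching probability, giving $P(X_i\text{ switched})\geq 1/(2M)$. With that repair your argument matches the paper's.
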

\begin{proof}
We will show that $V_{t,s}$ has an expected size which is decreasing exponentially in $s$.

First, we show the following fact: given any $B \in I_{t,s}$, we have $B \in I'_{t,s}$ with probability $\geq 1/2$. For, a sufficient condition for $B \in I'_{t,s}$ is that there is no variable $i \sim B$ with $B' \in I_{t,s}, \rho(B') < \rho(B)$ and $x_{B',i} \neq a_i$. For each variable $i$, there are at most $C_i - 1$ candidate $B' \in I_{t,s}$, and each of them has probability $q_i$ of setting $x_{B', i} \neq a_i$, so the expected number of such $B'$ is at most $q_i \times \frac{1}{M q_i} \times 1/2 \leq \frac{1}{2 M}$. Over all variables $i \sim B$, this gives a total probability of $\leq 1/2$.

Now consider any $B \in V_{t,s}$. By maximality of $I_{t,s}$, either $B \in I_{t,s}$, or $B$ contains some variable which occurs $C_i$ times in $I_{t,s}$. In the former case, $B$ is necessarily removed from $V_{t,s+1}$.

Now, suppose variable $i \sim B$ occurs exactly $C_i$ times in $I_{t,s}$. For each such occurrence $B'$, there is a probability of $\geq 1/2$ that $B' \in I'_{t,s}$. The event that $B' \in I'_{t,s}$ is independent of $x_{B', i}$, so each such $B'$ has a probability of $q_i/2$ that $B'$ is selected \emph{and} $x_{B',i} \neq a_i$. Hence the total expected number of $B' \in I'_{t,s}$ with $x_{B',i} \neq a_i$, is at least $C_i \times q_i/2 \geq \frac{1}{2 M}$. Note that there are either zero or one elements $B' \in I'_{t,s}$ with this property, hence the probability that $X_i$ switches is at least $\frac{1}{2 M}$. If this occurs, then we have $B \notin V_{t,s+1}$.

In either case, we have shown that a given $B \in V_{t,s}$ is removed with probability at least $1 - \frac{1}{2 M}$. Hence we have
$$
\bE[ | V_{t,s+1} | \mid \text{stage before stage $t,s$}] \leq (1 - \frac{1}{2 M}) |V_{t,s}|
$$
which implies that for $s \geq \Omega(M \log n)$ we have $V_{t,s} = \emptyset$ with high probability.
\end{proof}

Putting this all together, we have the following:
\begin{theorem}
\label{parallel-thm}
Suppose that each $B \in \mathcal B$ has size at most $M$. Suppose that we satisfy the condition 
$$
\forall B \in \mathcal B, \mu(B) \geq P_{\Omega}(B) (1 + \epsilon) \sum_{\substack{\text{$Y$ orderable}\\\text{to $B$}}} \prod_{B' \in Y} \mu(B')
$$

Then whp the Parallel MT algorithm terminates in time $\epsilon^{-1} M (\log W ) (\log^{O(1)} n) (M + \log^{O(1)} m)$ using $(n m)^{O(1)}$ processors.
\end{theorem}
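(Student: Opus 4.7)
The plan is to assemble the three bounds established in the preceding propositions—the number of outer rounds, the number of sub-rounds per round, and the parallel time per sub-round—and multiply them together, absorbing lower-order terms into the polylog factors.

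First, I would invoke the round-count proposition, which already asserts that the parallel algorithm halts after $O(\epsilon^{-1} \log W)$ rounds whp. Recall that that proof goes through Proposition~\ref{witness-tree-lemma2} to couple the parallel algorithm with the hybrid sequential variant, then uses Proposition~\ref{height-prop} to force any resampling in round $t$ to produce a witness tree of exact height $t$; the $(1+\epsilon)$-slack makes the total weight of height-$t$ trees rooted at $B$ decay like $\mu(B)(1+\epsilon)^{-t}$, and summing over $B \in \mathcal{B}$ with a union bound delivers the round count. Inside each round, the sub-round shrinkage proposition gives $\bE[|V_{t,s+1}|] \leq (1 - \tfrac{1}{2M})|V_{t,s}|$, so after $O(M \log n)$ sub-rounds we have $V_{t,s} = \emptyset$ whp, and a further union bound over the outer rounds keeps this simultaneous.

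Next, I would bound the parallel cost of a single sub-round. The VCMEP step costs $M \log^{O(1)}(nm)$ by the VCMEP theorem (each bad-event being a hyper-edge of size at most $M$); sampling the $x_{B,i}$ and $\rho(B)$, constructing $G_{t,s}$, evaluating bad-events, and updating $V_{t,s}$ are all doable in $\log^{O(1)}(nm)$ parallel time on $(nm)^{O(1)}$ processors; and the LFMIS extraction runs in $O(\log n / \log \log n)$ by the LFMIS proposition. Splitting $\log^{O(1)}(nm)$ into $\log^{O(1)} n + \log^{O(1)} m$, the per-sub-round cost is $O(M(\log^{O(1)} n + \log^{O(1)} m))$. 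Multiplying by the $O(M \log n)$ sub-round count and the $O(\epsilon^{-1} \log W)$ round count, and absorbing the extra $M \log n$ into the $\log^{O(1)} n$ factor (valid under the standing assumption $M \leq \text{polylog}(n)$), yields the stated bound $\epsilon^{-1} M (\log W)(\log^{O(1)} n)(M + \log^{O(1)} m)$ on $(nm)^{O(1)}$ processors.

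The main obstacle is really just bookkeeping over union bounds: there are $O(\epsilon^{-1} \log W \cdot M \log n)$ sub-rounds in total, so each of the per-sub-round whp events (VCMEP success, LFMIS small depth, $V_{t,s}$ shrinkage, no tall witness tree) must hold with inverse-superpolynomial failure probability. Each underlying proposition achieves this by tuning the hidden constants in its Markov-style decay argument, so the global union bound preserves ``whp.'' A minor additional subtlety is that Proposition~\ref{witness-tree-lemma2} gives only an equality in distribution of the $X_{i,t,s}$'s across the parallel and sequential algorithms; this is enough to transport the round-count bound from the hybrid sequential version back to the parallel algorithm without loss, but one should verify that the coupled round index in the parallel execution is preserved, which follows because sub-round structure and $V_{t,s}$ are measurable functions of the $X_{i,t,s}$'s.
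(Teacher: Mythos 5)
Your proposal is correct and follows exactly the route the paper intends: the theorem is stated as the assembly of the preceding propositions (round count $O(\epsilon^{-1}\log W)$ via the coupling and witness-tree-height argument, $O(M\log n)$ sub-rounds per round from the $(1-\tfrac{1}{2M})$ shrinkage, and per-sub-round cost dominated by the VCMEP and LFMIS subroutines), and the paper itself supplies no further proof beyond "putting this all together." Your extra care about the union bound over all sub-rounds and about transporting the round count through the distributional coupling is a reasonable filling-in of details the paper leaves implicit.
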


Note that the running time of the Simplified Parallel Moser-Tardos algorithm does not depend on $M$. In practice, when $M$ is large, then other parallel aspects of the Moser-Tardos algorithm can become problematic; for example, we need non-trivial parallel algorithms to enumerate and check the events of $\mathcal B$. It remains an interesting open problem to find a parallel algorithm which works in the regime in which $M$ is large \emph{and} the probabilities of the bad-events become close to $1$.
 
\section{Applications}
\label{app-section}

\subsection{SAT with bounded variable occurrences}
\label{sat-section}
Consider the following problem: we have a SAT instance, in which each clause contains at least $k$ variables. We are also guaranteed that each variable occurs in at most $L$ clauses, either positively or negatively. How can large can $L$ be so as to guarantee the existence of a solution to the SAT instance? This problem was first introduced by \cite{kratochvil}, which showed some bounds on $L$. Most recently, it was addressed by \cite{gst}; they showed that the criterion $L \leq \frac{2^{k+1}}{e (k+1)}$
suffices to guarantee that a solution exists (and can be found efficiently). This criterion is also shown to be asymptotically optimal (up to first-order terms). The main proof for this is to use the LLLL; they show that the worst-case behavior comes when each variable appears in a balanced way (half positive and half negative).

Although the criterion of \cite{gst} is asymptotically optimal, we can still improve its second-order terms. We show the following bound:
\begin{theorem}
If each variable appears at most
$$
L \leq \frac{2^{k+1} (1 - 1/k)^k}{k-1} - \frac{2}{k}
$$
times then the SAT instance is satisfiable, and the MT algorithm finds a satisfying occurrence in polynomial time.

Furthermore, suppose $L \leq \frac{2^{k+1} (1 - 1/k)^k}{(k-1)(1 + \epsilon)} - \frac{2}{k}$.
then whp the Parallel Resampling Algorithm finds a satisfying occurrence in time $\frac{ (k \log n)^{O(1)}}{\epsilon}$.
\end{theorem}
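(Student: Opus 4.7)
The plan is to apply the blended criterion (Proposition~\ref{blend-crit}) symmetrically, setting $\mu(B) = \mu$ for every bad-event. Each clause $C$ of the SAT instance yields an atomic bad-event $B_C$ demanding that every literal of $C$ be falsified; the worst case is $|C| = k$, so $P_\Omega(B_C) = 2^{-k}$. Following \cite{gst}, I first reduce to the \emph{balanced} case in which every variable appears at most $L/2$ times positively and at most $L/2$ times negatively, by iteratively fixing any variable whose polarity count exceeds $L/2$ to its majority value; this only shrinks the instance. Then for any $(i,j) \in B$, the inner sum $\sum_{j' \neq j}\sum_{B' \ni (i,j')} \mu(B')$ counts clauses containing $x_i$ with the opposite polarity to $B$, and is bounded by $(L/2)\mu$.

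Cancelling the $\mu(B)$ term, Proposition~\ref{blend-crit} reduces to $\mu(2^k - 1) \geq (1 + (L/2)\mu)^k$. I would optimize over $\mu$ by writing $t = (L/2)\mu$ and maximizing $t/(1+t)^k$, which has its critical point at $t = 1/(k-1)$; taking $\mu = 2/(L(k-1))$ then gives
\[
L \leq \frac{2(2^k - 1)(1 - 1/k)^k}{k - 1} = \frac{2^{k+1}(1-1/k)^k}{k-1} - \frac{2(1-1/k)^k}{k-1}.
\]
Since $(1-1/k)^k \leq 1 - 1/k$ for $k \geq 1$, the subtracted term is at most $\frac{2}{k}$, so the hypothesized bound implies the derived criterion, and Theorem~\ref{var-assignment-thm} gives the sequential convergence claim with expected polynomial runtime. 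For the parallel statement, I would repeat the derivation including the $(1+\epsilon)$ slack required by Theorem~\ref{parallel-thm}: the criterion becomes $\mu(2^k - 1 - \epsilon) \geq (1+\epsilon)(1 + (L/2)\mu)^k$, and the same choice of $t$ yields $L \leq \frac{2(2^k - 1 - \epsilon)(1-1/k)^k}{(k-1)(1+\epsilon)}$, which the same algebraic manipulation shows is implied by the hypothesized parallel bound. The runtime claim then follows directly from Theorem~\ref{parallel-thm} with $M = k$ and $W = O(m)$, together with the observation that $m \leq nL/k \leq n \cdot 2^{O(k)}$ so all polylogs collapse into $(k \log n)^{O(1)}$.

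The step I expect to require the most care is the reduction to the balanced case: without it, a bad-event $B$ in which $x_i$ appears positively only satisfies the weak bound $L_i' \leq L - 1$, which loses a factor of roughly $2^k$ in the resulting estimate and destroys the target. The cleanest fix is the iterative reduction described above, repeatedly eliminating variables whose polarity count exceeds $L/2$ by setting them to their majority value and deleting the satisfied clauses; this is essentially the argument of \cite{gst} adapted to our stronger criterion, and any satisfying assignment for the reduced instance extends trivially to the original one.
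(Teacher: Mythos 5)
Your algebra in the balanced case is fine (and in fact slightly tighter than needed: with all variables appearing at most $L/2$ times per polarity, your optimization at $t = 1/(k-1)$ recovers the stated bound with room to spare). The genuine gap is the reduction to the balanced case. Setting a variable $x_i$ with $l_i^+ > L/2$ positive occurrences to true does delete the $l_i^+$ clauses containing it positively, but it also \emph{removes the literal $\neg x_i$ from every clause containing $x_i$ negatively}. Those clauses now have $k-1$ literals, so their violation probability doubles to $2^{-(k-1)}$ while their variables still appear in up to $L$ clauses; iterating the reduction can shrink a clause repeatedly. Since the target $L$ is within a constant factor of the threshold $2^{k+1}/(e(k+1))$, the calculation cannot absorb even a single factor of $2$ in a bad-event's probability, so the reduced instance no longer satisfies the criterion. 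There is no clean combinatorial reduction here; this is precisely why the unbalanced case is the crux of the problem, and your attribution of this reduction to \cite{gst} is a misreading --- \cite{gst} (and the paper) never modify the instance.

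What the paper actually does is keep the instance intact and choose a \emph{biased} product distribution: a variable occurring in $l_i$ clauses, $\delta_i l_i$ of them positively, is set to true with probability $1/2 - x(\delta_i - 1/2)$ for a tuned parameter $x = \frac{\alpha k L}{2\alpha + 2k + \alpha k L}$. For a clause whose literals are all negative, the bad-event probability becomes $\prod_{i}(1/2 - x(\delta_i - 1/2))$ while the count of disagreeing clauses through variable $i$ is at most $\delta_i L$; the bias is chosen so that the product $\bigl(1/2 - x(\delta_i - 1/2)\bigr)\bigl(1 + \delta_i L \alpha + \alpha/k\bigr)$ is maximized at $\delta_i = 1/2$, which is how the ``worst case is balanced'' claim is actually justified. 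From that point the paper's optimization over $\alpha$ is essentially the computation you performed. So your proposal is missing the central analytic idea of the proof (the counterintuitive bias toward the minority polarity), and the patch you offer in its place does not work.
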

\begin{proof}
We will only prove the sequential result; the parallel result is almost identical.

For each SAT clause, we have a bad-event $B$ that it is violated. We define $\mu(B) = \alpha$ for each bad-event, where $\alpha > 0$ is a constant to be chosen.

As described by \cite{gst}, the key problem is to choose a good probability distribution for each variable. Suppose a variable $i$ occurs in $l_i$ clauses, of which it occurs $\delta_i l_i$ positively. In this case, we set variable $i$ to be T with probability $1/2 - x (\delta_i - 1/2)$, where $x \in [0,1]$ is a parameter to be chosen. This is quite counter-intuitive. One would think that if a variable occurs positively in many clauses, then one should set the variable to be T with high probability; in fact we do the opposite. 

We now wish to show that our MT criterion is satisfied. Let $C$ be a clause and suppose without loss of generality each variable appears in it negatively. Then the corresponding bad-event is that all such variables are true. This has probability $\prod_{i \in C} (1/2 - x (\delta_i - 1/2))$.  Now, consider the assignable sets for the clause. We may either select the singleton $C$ itself, or for each of the $k$ variables we may select one or zero other clauses in which the corresponding variable appears \emph{positively}. For each such variable $i$, the total number of such clauses is at most $\delta_i L$. Hence we have the criterion:
$$
\alpha \geq \prod_{i \in C} \Bigl( 1/2 - x (\delta_i - 1/2) \Bigr) \Bigl(\alpha + \prod_{i \in C} (1 + \delta_i L \alpha) \Bigr)
$$

We bound the RHS as follows:
\begin{equation}
\label{e1}
\begin{aligned}
\prod_{i \in C} \Bigl( 1/2 - x (\delta_i - 1/2) \Bigr) \Bigl(\alpha + \prod_{i \in C} (1 + \delta_i L \alpha) \Bigr) \leq  \prod_{i \in C} \Bigl( 1/2 - x (\delta_i - 1/2) \Bigr) \Bigl( 1 + \delta_i L \alpha + \alpha/k \Bigr)
\end{aligned}
\end{equation}

Now set $x = \frac{\alpha k L}{2 \alpha + 2 k + \alpha k L}$; clearly $x \in [0,1]$. With this choice, verify that that the RHS of (\ref{e1}), viewed as a function of $\delta_i$, achieves its maximum value at $\delta_i = 1/2$. Thus we have
\begin{align*} 
\prod_{i \in C} \Bigl( 1/2 - x (\delta_i - 1/2) \Bigr) \Bigl(\alpha + \prod_{i \in C} (1 + \delta_i L \alpha) \Bigr) \leq \prod_{i \in C} \frac{1}{2}(1 + \alpha/k + \alpha L/2) = 2^{-k} (1 + \alpha/k + \alpha L/2)^k
\end{align*}

We thus need to find $\alpha \geq 0$ such that 
\begin{equation}
\label{sat-e1}
\alpha - 2^{-k} (1 + \alpha/k + \alpha L/2)^k \geq 0
\end{equation}

Differentiate with respect to $\alpha$ to make the LHS of (\ref{sat-e1}) as large as possible. This yields our optimal choice of $\alpha$ namely:
$$
\alpha = \frac{2 k \left( (\frac{2^{k+1}}{2 + k L})^{\frac{1}{k-1}} -1\right)}{2 + k L}
$$

When $L \leq \frac{2^{k+1} (1 - 1/k)^k}{k-1} - \frac{2}{k}$, note that $(\frac{2^{k+1}}{2 + k L})^{\frac{1}{k-1}} \geq \frac{k}{k-1}$ and so $\alpha \geq 0$ as desired. Also, (\ref{sat-e1}) is satisfied. 
\end{proof}

\subsection{Hypergraph coloring}
Suppose we have a $k$-uniform hypergraph, in which each vertex appears in at most $L$ edges. We wish to $c$-color this hypergraph, while avoiding any monochromatic edges. There are many types of graphs and parameters for which better bounds are known, but the LLL gives very simple constructions and also provides the strongest bounds in some cases (particularly when $c, k$ are fixed small integers)\cite{mcdiarmid}.

Let us first examine how the conventional LLL analysis would work. Counter-intuitively, when $c$ is large it is better to use the standard LLL (defining $\sim$ in terms of simple dependency) and when $k$ is large it is better to use the LLLL (defining $\sim$ in terms of lopsidependency.) In the first case, a bad-event is that an edge is monochromatic (of an unspecified color). Consider an edge $f$. The neighbors of $f$ would be other edges that intersect $f$. An independent set of neighbors of $f$ consists of either $f$ itself, or for each vertex $v \in f$ we may select one or zero edges (other than $f$). Setting $\mu(B) = \alpha$ for all bad-events, this gives us the criterion
$$
\alpha \geq c^{1-k} (\alpha + (1 + (L-1) \alpha)^k)
$$
Routine calculations show that this can be satisfied if $L \leq \frac{c^{k-1} (1 - 1/k)^{k-1}}{k}$.

Alternatively, in the LLLL, a bad-event would be that an edge $f$ receives some color $j$. The neighbors of this event would be other edges receiving colors other than $j$; there are $k (L-1) (c-1) + c$ such neighbors. Using the symmetric LLL and some simplifications, one obtains the bound $L \leq \frac{c^k}{(c-1) e k}$. Pegden's criterion could improve this somewhat, although there would no longer be a simple closed form.

There seems to be a ``basic'' form of the bound $L \leq \frac{c^{k-1}}{e k}$; the standard LLL framework can improve on this using either Pegden's criterion
(replacing $1/e$ by $(1 - 1/k)^{k-1}$) or by lopsidependency (replacing one factor of $c$ by $(c-1)$), but cannot do both simultaneously.

\textbf{Our new LLLL criterion.} For each edge $f \in G$, we have $c$ bad-events, namely that $f$ is monochromatic of any given color. We assign $\mu(B) = \alpha$ to all bad-events, where $\alpha > 0$ is a parameter to be determined. We color each vertex independently and uniformly.

Now consider a bad-event $B$, say without loss of generality that edge $f$ receives color $1$. It has probability $c^{-k}$. Consider the orderable sets for $B$; we want to sum $\prod_{B' \in Y} \mu(B')$ over all such sets $Y$.

First, $Y$ may consist of $B$ itself; this contributes $\alpha$. Second,  $Y$ may consist of, for each vertex $v \in f$, zero edges or one edge other than $f$ receiving one color $2, \dots, c$. 
Finally, we may have one vertex select $f$ and some color for it; some set of other vertices selects other edges and other colors. 
Summing all these cases, we have the criterion for $B$:
\begin{equation}
\label{bnd1}
\begin{aligned}
\alpha \geq c^{-k} \Bigl[ (1 + \alpha  (c-1) (L-1))^k  +\alpha  (c-1) ((1 + \alpha  (c-1)
   (L-1))^k  -(\alpha  (c-1) (L-1))^k)+\alpha \Bigr]
\end{aligned}
\end{equation}

This has no closed-form solution for general $L, k$. But, for fixed values of $L, k$ it is easily solvable. For example, when $c = 2$, we list the largest values of $L$ which are obtained by either our improved MT criterion or the original MT criterion (listed under $L'$):
\begin{center}
\begin{tabular}{|c|c|c||c|c|c|} \hline  
$k$ & $L$ & $L'$ & $k$ & $L$ & $L'$ \\
\hline
4 & 2 & 2 & 8 & 13 & 12 \\
5 & 3 & 3 & 9 & 23 & 21 \\
6 & 5 & 4 & 10 & 40 & 38\\
7 & 8 & 7 & 11 & 72 & 69 \\
\hline
\end{tabular}
\end{center}

We see that our new criterion indeed gives (slightly) stronger bounds. For the asymptotic case when $L$ is large, note that the RHS of (\ref{bnd1}) can be approximated:
\begin{align*}
\text{RHS}  &\leq c^{-k} (\alpha + (1 + \alpha  (c-1) (L-1))^k (1 + (c-1) \alpha)) \leq c^{-k} (1 + \alpha  (c-1) L)^k
\end{align*}

Thus, setting $\alpha = \frac{\left(\frac{c^k}{(c-1) k L}\right)^{\frac{1}{k-1}}-1}{(c-1) L}$,
we satisfy the LLLL criterion if
$$
L \leq \frac{c^k (1-1/k)^{k-1}}{(c-1) k}.
$$
which is slightly better than the bounds from the conventional LLLL.

\subsection{Second Hamiltonian cycle}
Consider a $k$-regular graph $G$, with a Hamiltonian cycle $C$. Under what conditions is there a second Hamiltonian cycle $C'$ (that is, the cycle $C$ is not unique)? In \cite{thomassen}, Thomassen showed that a sufficient condition for the existence of the second cycle is a set of vertices $S \subseteq V$ which is simultaneously a dominating set for $G - C$ and an independent set for $C$. Specifically, $S$ must satisfy the following two conditions:
\begin{enumerate}
\item If $v$ and $w$ are adjacent on the cycle $C$, then $v$ and $w$ are not both in $S$.
\item For any vertex $v \in G$, either $v$ is in $S$, or it is connected to a vertex $w \in S$ via some edge $e \notin C$.
\end{enumerate}

Using the LLL, Thomassen then showed that this can always be satisfied as long as $k \geq 73$. This was based on a simple random process, in which each vertex was put into $S$ independently with probability $p$. Using the LLL with a much more sophisticated random process, Haxell showed that this condition can be satisfied as long as $k \geq 23$ \cite{haxell-second}. It was conjectured that this condition could be satisfied as long as $k \geq 5$.

Haxell's proof is quite involved, and our LLLL criterion would offer little benefit for it (as all the bad-events involve many vertices). In \cite{ghandehari}, there was a simple proof using the LLLL that this condition can be satisfied as long as $k \geq 48$. Our LLLL criterion can be used to give another very simple proof under the condition $k \geq 43$. While not as good as Haxell's construction, the proof is far simpler.

\begin{theorem}
If $G$ is a $k$-regular graph for $k \geq 43$ and $C$ is a Hamiltonian cycle of $G$, then there is a $G-C$-dominating, $C$-independent set $S \subseteq V$.
\end{theorem}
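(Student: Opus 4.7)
The plan is to apply the variable-assignment LLLL criterion via Proposition~\ref{blend-crit}. Use the simple random construction: independently place each vertex $v$ in $S$ with probability $p$, where $p \in (0,1)$ is a parameter to be chosen. Thus each variable $X_v \in \{0,1\}$ is drawn independently. Thomassen's condition fails exactly when one of two kinds of bad-events occurs:
\begin{itemize}
\item For each cycle edge $vw \in C$, the event $B_{vw} = \{X_v = 1, X_w = 1\}$, with $P_\Omega(B_{vw}) = p^2$.
\item For each vertex $v$, letting $w_1, \dots, w_{k-2}$ be the neighbors of $v$ in $G - C$, the event $A_v = \{X_v = 0, X_{w_1} = 0, \dots, X_{w_{k-2}} = 0\}$, with $P_\Omega(A_v) = (1-p)^{k-1}$.
\end{itemize}

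Next I would analyze which bad-events disagree on which variables. Since every $B_{vw}$ assigns its variables to $1$ and every $A_v$ assigns its variables to $0$, two events of the same type never disagree; only an edge-event and a vertex-event can be lopsidependent. More precisely, at each coordinate $(x,1) \in B_{vw}$, the disagreeing bad-events are exactly those $A_u$ with $(x,0) \in A_u$, i.e.\ $u = x$ or $u \in N_{G-C}(x)$; there are $1 + (k-2) = k-1$ such events. Symmetrically, at each coordinate $(x,0) \in A_v$, the disagreeing bad-events are the two $B_{xy}$ with $y$ a cycle-neighbor of $x$.

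Set $\mu(B_{vw}) = \beta$ and $\mu(A_v) = \alpha$. Plugging the counts above into the criterion of Proposition~\ref{blend-crit}, it suffices to find $p \in (0,1)$ and $\alpha,\beta > 0$ such that
\begin{align*}
\beta &\geq p^2 \bigl[\, \beta + (1 + (k-1)\alpha)^2 \,\bigr], \\
\alpha &\geq (1-p)^{k-1} \bigl[\, \alpha + (1 + 2\beta)^{k-1} \,\bigr].
\end{align*}
Once such a triple exists, Proposition~\ref{blend-crit} guarantees that MT terminates, and in particular a valid $S$ exists.

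The remaining step, and the main obstacle, is numerical: to show this two-inequality system is feasible for every $k \geq 43$. I would attack it as follows. Choose $p$ of order $1/k$ to roughly balance the two ``bad'' probabilities (one grows in $p$, the other decays in $p$); solve the first inequality as a quadratic in $\beta$ at equality to obtain $\beta$ as an explicit function of $\alpha, p, k$; substitute into the second inequality to get a single inequality in $(\alpha, p, k)$; optimize over $\alpha$ (also of order $1/k$). Verifying feasibility at $k = 43$ is then a finite computation; for $k > 43$ one can either repeat the computation or argue by monotonicity, since decreasing $(1-p)^{k-1}$ and the growth of $k-1$ can be absorbed by slightly shrinking $p$ and $\alpha$. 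The delicate part is squeezing the constants down to $k = 43$ rather than $k = 48$ (as in \cite{ghandehari}); the saving comes precisely from the ``blended'' form of Proposition~\ref{blend-crit}, which replaces the lopsided LLLL factor $\prod (1 + (k-1)\alpha)$ (summed over independent neighbor choices) by the tighter expression above that treats edge-events and vertex-events asymmetrically.
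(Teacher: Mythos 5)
Your proposal matches the paper's proof essentially step for step: the same random set (each vertex in $S$ independently with probability $p$), the same two families of bad-events with the same probabilities, constant weights on each family, and the same two-inequality system coming from the blended criterion of Proposition~\ref{blend-crit} --- indeed your per-vertex count of $k-1$ disagreeing vertex-events (namely $A_x$ itself plus the $k-2$ events $A_u$ with $u\in N_{G-C}(x)$) is the careful one, where the paper writes $k-2$, and your slightly stronger system is still feasible at $k=43$ (e.g.\ $p=0.3$, $\alpha=1/600$, $\beta\approx 0.113$). The one step you defer, the numerical feasibility check for all $k\geq 43$, is deferred in exactly the same way by the paper (``routine calculations'').
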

\begin{proof}
Each vertex enters into $S$ independently with probability $p$. There are two types of bad-events: for each edge of $C$, there is an event of type A, that the endpoints are both in $S$; for each vertex of $G - C$, there is an event of type B, that $v$ nor its $k-2$ neighbors outside of $C$ are in $S$. We assign $\mu(B) = a$ for all events of the first type, and $\mu(B) = b$ for all events of the second type. Note that events of type A are lopsidependent only with events of type B, and vice versa.

Now consider an event of type A. It has probability $p^2$. There are two vertices in this edge, each of which participates in $k-2$ events of type B. Similarly, an event of type B has probability $(1-p)^{k-1}$, and each of the $k-1$ vertices participates in two events of type A. Hence our LLLL criteria can be stated as
$$
a \geq p^2 (1 + (k-2) b)^2, \quad b \geq (1-p)^{k-1} (1 + 2 a)^{k-1}
$$
Routine calculations show that this is solvable for $k \geq 43$.
\end{proof}

\subsection{Independent transversals}
Suppose we are given a graph $G$, along with a partition of the vertices $V = V_1 \sqcup \dots \sqcup V_k$ in which each class has size exactly $b$. We would like to select one vertex from each class; this is known as a \emph{transversal}. If we select a transversal which is also an independent set, this is known as an \emph{independent transversal}.  Typically, bounds for the existence of an independent transversal are given in terms of the maximum degree $\Delta$.

Haxell showed that when $b \geq 2 \Delta$, an independent transversal exists, and this is the optimal constant \cite{haxell}. However, this result is non-constructive. The best algorithms for finding independent transversals come from the MT algorithm. A simple application of LLL shows that $b \geq 2 e \Delta$ suffices to guarantee an independent transversal. Pegden's criterion shows that $b \geq 4 \Delta$ suffices. We slightly can improve this, obtaining the best constructive bound known so far:
\begin{proposition}
\label{indep-trans-prop}
Suppose $b \geq 4 \Delta - 1$. Then the MT algorithm finds an independent transversal in polynomial expected time. Furthermore, under these conditions, the Parallel MT algorithm runs in time 
$$
\log^{O(1)} n \times \min(1, \frac{4 (b-1) \Delta}{b^2 - 4 (b - 1) \Delta } )
$$
\end{proposition}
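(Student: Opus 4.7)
The plan is to apply the new MT criterion of Theorem~\ref{var-assignment-thm} via its convenient ``assignable'' reformulation in Proposition~\ref{blend-crit}. I set up the variable-assignment framework by introducing, for each class $V_i$, a variable $X_i$ drawn uniformly from the $b$ vertices of $V_i$, and, for each edge $e = (u,v)$ of $G$ with $u \in V_i$ and $v \in V_j$ ($i \neq j$), the atomic bad event $B_e \equiv (X_i = u) \wedge (X_j = v)$. Avoiding all bad events is equivalent to the chosen transversal being independent. Each $B_e$ has $P_\Omega(B_e) = 1/b^2$ and involves exactly two variable-value pairs.

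I will take the weight $\mu(B) = \alpha$ constant, to be optimized. For a fixed pair $(i,u) \in B_e$, the inner sum $\sum_{u' \neq u}\sum_{B' \ni (i,u')} \mu(B')$ from Proposition~\ref{blend-crit} counts bad events anchored at some vertex $u' \in V_i \setminus \{u\}$; since each such $u'$ lies on at most $\Delta$ edges of $G$, this sum is at most $(b-1)\Delta\,\alpha$. Substituting into Proposition~\ref{blend-crit} yields the single inequality
\[
\alpha \;\geq\; \tfrac{1}{b^2}\!\left[\alpha + \bigl(1 + (b-1)\Delta\,\alpha\bigr)^2\right],
\]
which rearranges to the quadratic $(b-1)^2 \Delta^2 \alpha^2 - (b-1)(b+1-2\Delta)\alpha + 1 \leq 0$. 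A quick discriminant check shows this is solvable with $\alpha \geq 0$ exactly when $|b+1-2\Delta| \geq 2\Delta$, i.e.\ when $b \geq 4\Delta - 1$; at the threshold the unique solution is $\alpha^{*} = 1/((b-1)\Delta)$. This proves the sequential claim.

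For the parallel result I will invoke Theorem~\ref{parallel-thm} with slack factor $1+\epsilon$. The condition above becomes $(b^2 - (1+\epsilon))\alpha \geq (1+\epsilon)\bigl(1+(b-1)\Delta\,\alpha\bigr)^2$. The one-variable function $\alpha \mapsto (1+(b-1)\Delta\,\alpha)^2/\alpha$ is minimized at $\alpha^{*} = 1/((b-1)\Delta)$, where it equals $4(b-1)\Delta$, so the largest admissible $\epsilon$ satisfies
\[
\tfrac{1}{\epsilon} \;=\; \tfrac{1 + 4(b-1)\Delta}{b^2 - 1 - 4(b-1)\Delta} \;=\; \Theta\!\left(\tfrac{4(b-1)\Delta}{b^2 - 4(b-1)\Delta}\right),
\]
which is exactly the second factor in the claimed running time; the $\min$ with $1$ absorbs the regime in which the natural slack is already $\Omega(1)$. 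Since $M = 2$, $m = O(n\Delta)$, and $W = O(\alpha \cdot n\Delta)$, the remaining factors in Theorem~\ref{parallel-thm} collapse into $\log^{O(1)} n$, giving the stated bound.

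All computations are elementary; the only conceptual point worth flagging is the source of the ``$-1$'' improvement over Pegden's threshold $b \geq 4\Delta$. The restriction $u' \neq u$ built into the new criterion (a direct use of lopsidependency) lets me bound the inner neighbor sum by $(b-1)\Delta$ rather than $b\,\Delta$, and this single saving is precisely what turns the inequality discriminant into the sharper $b \geq 4\Delta-1$. The main ``obstacle'' is really just identifying the optimal $\alpha^{*}$ and reading off the discriminant condition; there is nothing delicate beyond that.
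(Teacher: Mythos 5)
Your proof is correct and follows essentially the same route as the paper: the same bad events with $P_\Omega(B_e)=1/b^2$, constant weight $\mu\equiv\alpha$, the count of at most $(b-1)\Delta$ neighboring bad events per endpoint (the lopsided ``$u'\neq u$'' saving), and the identical inequality $\alpha\geq b^{-2}\bigl(\alpha+(1+(b-1)\Delta\,\alpha)^2\bigr)$, which you solve more explicitly (via the discriminant and $\alpha^{*}=1/((b-1)\Delta)$) than the paper's ``routine calculations.'' Your treatment of the parallel bound via Theorem~\ref{parallel-thm} is also in the spirit of the paper's omitted argument, with the only caveat being the boundary case $b=4\Delta-1$, where your exact slack $\epsilon=\frac{b^2-1-4(b-1)\Delta}{1+4(b-1)\Delta}$ degenerates to $0$ while the stated expression stays finite --- a discrepancy already present in the paper's own (unproved) parallel claim.
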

\begin{proof}
We prove the first statement only; the second is similar.

Each edge corresponds to a bad-event; it has probability $1/b^2$. For an assignable set of neighbors to an edge $f = \langle u, v \rangle$, we may choose $f$, or we may choose $\langle u', x \rangle$ where $u' \neq u$ is in the class of $u$, or we may choose $\langle v', x \rangle$ where $v' \neq v$ is in the class of $v$; or we may choose both. This gives us the criterion
$$
\alpha \geq b^{-2} (\alpha + (1 + (b-1) \Delta \alpha)^2)
$$
which is satisfied by some $\alpha \geq 0$ whenever $b \geq 4 \Delta - 1$.
\end{proof}

Note that the second condition gives an RNC algorithm either if $b \geq 4 \Delta (1 + \epsilon)$ for some constant $\epsilon > 0$, or 
if $b \geq 4 \Delta - 1$ and $\Delta = \log^{O(1)} n$.

\subsection{Off-diagonal Ramsey numbers}
\label{sec:Ramsey}
In this section, we consider the classical off-diagonal Ramsey problem on graphs. Suppose we wish to two-color -- with colors red and blue -- the edges of $K_n$, the complete graph on $n$ vertices. We wish to avoid any red $s$-cliques or blue $t$-cliques in the resulting graph. The largest value $n$ for which it is possible to avoid such cliques is known as the \emph{off-diagonal Ramsey number $R(s,t)$}. There are many aspects and generalizations studied for Ramsey numbers. One frequently studied scenario is when $s$ is held constant while $t \rightarrow \infty$. 

 It was shown in \cite{spencer-ramsey}, using the LLL, that when $n \leq c (t/\log t)^{\frac{s+1}{2}}$ and $c$ is a constant (depending on $s$) that such a coloring is possible. In other words, $R(s,t) \geq \Omega_s\bigl( (t/\log t)^{\frac{s+1}{2}} \bigr)$. For specific values of $s$, better bounds are known (e.g., $R(3,t) = \Theta(t^2/\log t)$), but this is the best bound known for general $s$.
The algorithmic challenge is to efficiently find colorings of the edges of $K_n$ that avoid red $K_s$ and blue $K_t$. Such algorithms should operate when $n$ is as large as possible, ideally up to $R (s,t)$. Unfortunately, the LLL construction of \cite{spencer-ramsey} does not lead to efficient serial or parallel algorithms. The main roadblock is that there is a bad-event for each $t$-clique, so that finding a bad event requires exponential time. For specific values of $s$, again, there are known polynomial-time algorithms for finding good colorings. But in general there is no algorithm that corresponds to the best bounds.

In \cite{hss}, an algorithm based on MT was proposed for finding such colorings. The basic idea of \cite{hss} is to find and resample red $K_s$, while ignoring blue $K_t$. One then shows that, high probability, none of the $K_t$ became blue, even though we did not explicitly check or resample them. The serial running time of this would be $\Omega_s (n^s)$, to search for the red $K_s$; no parallel algorithm was given. 

These results depend on the ``MT-distribution''; that is, the distribution on the variables when the MT algorithm terminates. We will show that in the MT distribution there is only a small probability of a blue $K_t$. Though we did not show this explicitly in Section~\ref{new-mt-proof}, it is not hard to use Lemma~\ref{witness-tree-lemma} to derive a bound on the MT distribution, similar to \cite{hss}:
\begin{theorem}
\label{mt-dist-thm}
Suppose we have a set of bad-events $\mathcal B$ which satisfies our LLLL criterion which weights $\mu$. Suppose $E$ is any atomic event (which is not itself in $\mathcal B$). Then the probability that $E$ is true at the end of the MT algorithm is given by
\begin{align*}
P(\text{$E$ is true at end of MT}) \leq P_{\Omega}(E) \sum_{\text{$Y$ orderable to $E$}} \prod_{B' \in Y} \mu(B')
\end{align*}
\end{theorem}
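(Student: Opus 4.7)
The plan is to treat $E$ as a \emph{phantom} bad-event that we formally resample once, immediately after the MT algorithm terminates, if and only if $E$ happens to be true at that point. Construct an associated witness tree $\hat\tau_E$ rooted at $E$ by exactly the same backward scan used in Section~\ref{new-mt-proof}: start with the singleton node $E$, walk backward through the execution log, and attach each encountered bad-event to the deepest eligible position in the growing tree. Since the phantom resampling happens if and only if $E$ is true at the end, we have
$$
P(\text{$E$ true at end of MT}) \;=\; P\Bigl(\bigvee_{\tau \text{ rooted at } E} \hat\tau_E = \tau\Bigr).
$$

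Next, I would show that the Witness Tree Lemma (Lemma~\ref{witness-tree-lemma}) extends to trees rooted at $E$, giving $P(\hat\tau_E = \tau) \leq w(\tau) = P_\Omega(E) \prod_{B' \in \tau \setminus \{E\}} P_\Omega(B')$. The inductive proof is unchanged: the induction peels off leaves, which are always genuine bad-events encountered during a real resampling step, and it is only in the very last step (reducing $\{E\}$ to $\emptyset$) that the phantom event's probability enters, contributing the factor $P_\Omega(E)$ through the requirement that the variables of $E$ take their active values in the final state. The key structural facts (Propositions~\ref{tree-properties-prop}, \ref{update-tree-prop}, and the active-value interpretation) apply to $E$ just as well as to a real bad-event, because the construction rule for $\hat\tau_E$ is defined purely in terms of eligibility, which depends only on the orderability relation to $E$.

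Then I would union-bound over trees $\tau$ rooted at $E$ and group them by the labels $Y = \{B_1, \dots, B_s\}$ assigned to the root's children. By the invariant maintained during tree construction, $Y$ must be orderable to $E$. Writing $T_{B'}$ for the set of witness trees rooted at $B'$, the usual induction on tree height (as in the proof of the main Theorem of Section~\ref{new-mt-proof}) gives $\sum_{\sigma \in T_{B'}} w(\sigma) \leq \mu(B')$. Multiplying over the children $B' \in Y$ and then summing over all orderable $Y$ yields
$$
P(\text{$E$ true at end of MT}) \;\leq\; P_\Omega(E) \sum_{\substack{Y \text{ orderable}\\\text{to } E}} \prod_{B' \in Y} \mu(B'),
$$
which is the desired bound.

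The only delicate point is the extension of Lemma~\ref{witness-tree-lemma} to a root $E \notin \mathcal B$. The potential worry is that the original proof implicitly used the fact that the root arose from an actual resampling, e.g.\ to invoke stochasticity for the variables in $E$. But in fact the stochasticity invocations in the induction are applied only to the leaf $B$ being peeled off at each step, never to the root; and the final "initialization" factor $\prod_i P_\Omega(X_i = A_i(\tau))$ is supplied by the initial draw of the variables in $\Omega$, not by any resampling of $E$. So the argument really does go through unchanged, and this is what legitimizes running the whole machinery on a phantom event.
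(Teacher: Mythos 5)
Your proposal is correct and follows essentially the same route as the paper's (very terse) proof: build a witness tree rooted at $E$, note that the root's children form an orderable set, extend the Witness Tree Lemma to the non-bad-event root, and union-bound. The only cosmetic difference is that the paper roots the tree at the first time $E$ becomes true while you root it at termination via a phantom resampling; both yield the same bound, and your discussion of why the inductive argument never needs stochasticity for the root is exactly the right point to check.
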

\begin{proof}
One can construct a witness tree for the first time that $E$ becomes true. Such a witness tree is rooted in $E$, and its children correspond to a set of bad-events which is orderable to $E$. Thus, the result follows by taking a union-bound over all such witness trees. See \cite{hss} for more details.
\end{proof}

Using this result,we will give a serial algorithm for off-diagonal Ramsey coloring with a much better run-time, and we will also give a parallel algorithm based on our Parallel Resampling Algorithm.
\begin{theorem}
\label{thm:off-diagonal-ramsey}

Consider the problem of finding a red-blue coloring of the edges of $K_n$ avoiding red $K_s$ and blue $K_t$.  Define 
$$
c_s = 
   \left(\frac{2}{s}-\frac{2}{s-1}+1\right)^{\frac{s+1}{2}}  \left( \frac{2 (s-2)!}{s (s-1)^{\binom{s}{2}}} \right)^{\frac{1}{s-2}}.
$$

\begin{enumerate}
\item Suppose $n \leq (\frac{t}{\log t})^{\frac{s+1}{2}} (c_s - o(1))$. There is a serial randomized algorithm which runs in time $n^{s/4 + O(1)}$ and produces a correct solution when it halts, except with a failure probability of $n^{-\Omega(1)}$.
\item Suppose $s$ is constant and $n \leq (\frac{t}{\log t})^{\frac{s+1}{2}} (c_s - o(1))$. There is a parallel randomized algorithm which runs in time $s^{O(1)} \log^{O(1)} n$ time using $n^{s/4+O(1)}$ processors, and produces a correct solution when it halts, except with a failure probability of $n^{-\Omega(1)}$.
\end{enumerate}
\end{theorem}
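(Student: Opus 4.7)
The plan follows \cite{hss}: apply MT only to the bad events ``$K_s$ is monochromatic red'' while leaving the blue $K_t$'s to be handled implicitly via the MT-distribution bound Theorem~\ref{mt-dist-thm}. Color each edge of $K_n$ independently red with probability $p$ and blue with probability $1-p$, where $p = \Theta(\log t/t)$ is a tuning parameter. Since the algorithm never searches for or resamples any $K_t$, the running time depends only on the cost of locating and resampling red $K_s$'s, and the absence of a blue $K_t$ at termination is argued purely from the distributional bound.

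First I would verify our new LLLL criterion (Theorem~\ref{var-assignment-thm}) for the red-$K_s$ events with constant weights $\mu(B) = \alpha$. The key observation is that every bad event demands ``every edge red,'' so no two of them ever disagree on any variable; consequently the only orderable sets to a given $B$ are $\emptyset$ and $\{B\}$ itself, and the MT criterion collapses to $\alpha \geq p^{\binom{s}{2}}(1+\alpha)$. This is satisfied by $\alpha \approx p^{\binom{s}{2}}$, and the expected total number of resamplings is at most $\binom{n}{s}\alpha$.

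Next I would apply Theorem~\ref{mt-dist-thm} to the atomic event $E_T$ that a fixed $t$-subset $T$ is monochromatic blue. Its base probability is $(1-p)^{\binom{t}{2}}$. Here the asymmetry flips: a red-$K_s$ event \emph{does} disagree with $E_T$ on any edge that they share, so the orderable sets to $E_T$ are now nontrivial. The ordering condition in (O2) forces each selected $K_s$ to be assigned a distinct edge of $T$, and each edge lies in at most $\binom{n-2}{s-2}$ red $K_s$'s, giving $\sum_{Y\text{ orderable to }E_T} \prod_{B'\in Y}\mu(B') \leq \bigl(1 + \binom{n-2}{s-2}\alpha\bigr)^{\binom{t}{2}}$. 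A union bound over the $\binom{n}{t}$ candidate subsets then yields
\[
\Pr[\text{any blue } K_t \text{ survives}] \;\leq\; \binom{n}{t}(1-p)^{\binom{t}{2}}\Bigl(1 + \binom{n-2}{s-2}\alpha\Bigr)^{\binom{t}{2}}.
\]
Substituting $\alpha \approx p^{\binom{s}{2}}$ and $p = \Theta(\log t/t)$, taking logs, and balancing the leading-order terms in $t$ against $t \log(n/t)$ yields the extremal constraint $n \leq (t/\log t)^{(s+1)/2}(c_s - o(1))$; the specific form of $c_s$ drops out of the second-order optimization.

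For the serial runtime, the expected number of resamplings is $\binom{n}{s}\alpha \approx p^{\binom{s}{2}} n^s$, which after substitution is polynomial in $n$; each resampling touches only $\binom{s}{2}$ edges, so a dynamic clique-listing data structure on the (sparse) red subgraph supports each step in amortized time, yielding the claimed $n^{s/4+O(1)}$ bound. For the parallel algorithm I would invoke Theorem~\ref{parallel-thm} with $M = \binom{s}{2} = O(1)$, which immediately gives $s^{O(1)}\log^{O(1)} n$ time per round and bounds the total round count by $\epsilon^{-1}\log W$. The subtle point, and the main obstacle, is that $P_\Omega(\text{edge blue}) = 1-p$ is very close to $1$---precisely the pathological regime flagged in Section~\ref{sec:Ramsey} where the simplified warm-up algorithm fails. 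Fortunately this is exactly the situation the full algorithm of Section~\ref{full-parallel-section} was designed for: its capacities $C_i = \lceil 1/(Mq_i) \rceil$ remain polynomially bounded because $q_i \geq p = \Omega(\log t/t)$, and with $M = O(1)$ all logarithmic overheads in Theorem~\ref{parallel-thm} collapse, giving the stated processor count of $n^{s/4 + O(1)}$ after absorbing the clique-enumeration cost.
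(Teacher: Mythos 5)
Your overall architecture is exactly the paper's: run MT only on the red-$K_s$ events, note that these events never disagree with one another so the orderable sets collapse to $\emptyset$ and $\{B\}$ (giving $\alpha \geq q(1+\alpha)$ with $q = p^{\binom{s}{2}}$), then bound the probability of a surviving blue $K_t$ via Theorem~\ref{mt-dist-thm} with the orderable sets enumerated edge-by-edge, and union-bound over the $\binom{n}{t}$ candidates. Your choice $p = \Theta(\log t/t)$ matches the paper's $p = \Theta(n^{-2/(s+1)})$ at the extremal $n$, and your identification of $1-p \to 1$ as the pathological regime requiring the full parallel algorithm of Section~\ref{full-parallel-section} is precisely the point the paper makes. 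Deferring the explicit optimization that produces $c_s$ is acceptable.

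The genuine gap is in the serial running-time claim. You assert that ``a dynamic clique-listing data structure on the (sparse) red subgraph supports each step in amortized time, yielding the claimed $n^{s/4+O(1)}$ bound,'' but nothing in your argument produces the exponent $s/4$, and the data-structure claim itself rests on an unstated structural fact. The paper's argument has two ingredients you are missing. First, resampling a red $K_s$ can never create a new red $K_s$: any clique all of whose edges are red after the resampling already had all its edges red before it (edges outside the resampled clique are untouched, and edges inside it were red beforehand). Hence a single enumeration of red $K_s$'s at initialization suffices, and each can simply be resampled until it is no longer red; no re-searching is needed. Second, that one-time enumeration is done by a branching process building up red $K_\ell$ for $\ell \leq s$ in the random graph with edge probability $p$; its cost is governed by $\max_{\ell} \binom{n}{\ell} p^{\binom{\ell}{2}}$, and with $p = \Theta(n^{-2/(s+1)})$ this maximum is attained near $\ell \approx s/2$ and equals $n^{s/4+O(1)}$. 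Without these two observations your runtime bound is unsupported --- a naive search for red $K_s$ costs $n^{s}$, not $n^{s/4}$ --- and the same enumeration argument is what justifies the $n^{s/4+O(1)}$ processor count in the parallel case.
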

\begin{proof} 
The proofs of both parts are very similar; to simplify the discussion, we will focus mostly on the serial algorithm, noting any difference between that and the parallel algorithm.

The probability space $\Omega$ is defined by coloring each edge red with probability
$$
p = \left(\frac{2 (s-2)!}{(s-1) s} \right)^{\frac{2}{s^2-s-2}} n^{\frac{-2}{s+1}} 
$$
and blue otherwise. 
We ignore the blue $K_t$ and so our only bad-events are the red $K_s$. Each bad-event has probability $q = p^{\binom{s}{2}}$. Observe that each bad-event is lopsidependent with only a single bad-event, namely itself. So the LLLL criterion is satisfied, setting $\mu(B) = \frac{q}{1 - q}$ for all bad-events $B$.

Now consider an arbitrary $K_t$, and let $E$ be the event that it is red at the end of the MT. We have $P_{\Omega}(E) = (1-p)^{\binom{t}{2}}$. The orderable sets for this event can be found as follows: for each of the $\binom{t}{2}$ edges, we may select zero or one blue $K_s$. Thus, by Theorem~\ref{mt-dist-thm}, the probability that $E$ holds at the end of MT is given by 
{\allowdisplaybreaks
\begin{align*}
P(\text{$K_t$ is blue}) &\leq \Bigl(  (1-p) (1 +  \binom{n-2}{s-2} \mu) \Bigr)^{\binom{t}{2}} \\
& \leq \Bigl(  1 - \left( \frac{s (s-1)}{2 (s-2)!} \right)^{\frac{2}{-s^2+s+2}} \left( 1 + \frac{2}{s-s^2} \right) n^{\frac{-2}{s+1}} \Bigr)^{\binom{t}{2}} \\
& \leq \exp\Bigl( -\binom{t}{2} c'_s n^{\frac{-2}{s+1}} \Bigr)  \quad \quad \text{where $c'_s = \Bigl( \frac{2 (s-2)!}{s (s-1)} \Bigr)^{\frac{2}{s^2-s-2}} \Bigl( 1 + \frac{2}{s-s^2} \Bigr) $}
\end{align*}
}

Hence, the expected number of blue $K_t$ is at most 
$$
\bE[\text{Blue $K_t$}] \leq \frac{n^t}{t!}  \exp(-t^2 c'_s n^{\frac{-2}{s+1}}/2)
$$

In order to avoid all blue $K_t$ with high probability, say with probability $n^{-\phi}$ for $\phi > 0$ some arbitrary constant, we must take
\begin{align*}
n &\leq \left(\frac{c'_s t^2}{(s+1)(\phi+t)
   \log \left(\frac{c'_s t^2 (t!)^{-\frac{2}{(s+1)
   (\phi +t)}}}{(s+1) (\phi +t)}\right)}\right)^{\frac{s+1}{2}}
=\bigl( t/\log t \bigr)^{\frac{s+1}{2}} ( c_s - o_s (1) )
\end{align*}

So far, we have shown that when we run the MT algorithm with the given parameters, then indeed we avoid $K_t$ with high probability. The next thing we must examine is how to run the MT algorithm. In this problem, in which the bad-events are defined to be red $K_s$, the MT algorithm is somewhat degenerate. The critical thing to note is that when we resample a bad-event, we can never create new bad-events. Thus, the most potentially time-consuming step of MT --- repeatedly searching for any bad-events which are currently true --- can be much simplified. At the beginning of the process, after we make the initial random color assignment but before we do any resamplings, we can enumerate all red $K_s$. For each such red $K_s$, we repeatedly sample the edges until the $K_s$ is no longer red. The process of finding the red $K_s$ can be aided by the fact that we are searching for them in a \emph{random} graph --- namely, the edges are red independently with probability $p$.

The simplest way to search for such red $K_s$ seems to be through a branching process: we gradually build up red $K_{l}$, for $l \leq s$. In this branching process, the expected number of red $K_l$ is $p^{\binom{l}{2}} \binom{n}{l}$. Thus, the total time complexity of this branching process will be
\begin{align*}
\text{Time to find red $K_s$} 
&\leq n^{O(1)} \sum_{l = 0}^s \binom{n}{l} p^{\binom{l}{2}} \\
&\leq n^{O(1)} \exp( \max_{l \in [0,s]} l \log n + \frac{(l-1)^2}{2} \log p - l \log l + l ) \\
& \leq n^{s/4 + O(1)} \qquad \text{(by routine calculus)}
\end{align*}

This concludes part (1) of the theorem. As the Parallel Moser-Tardos algorithm simulates the serial algorithm, all the results about the MT distribution still remain true in the parallel setting; in particular we avoid the blue $K_t$ with high probability. Also, one can enumerate the red $K_s$ in parallel, using $n^{s/4 + O(1)}$ processors and $s$ stages, via the same type of branching process. 

 One can see easily that we satisfy the parallel LLLL criterion for $n$ sufficiently large, setting $\mu(B) = 1$ for all $B$ and $\epsilon = 1/2$. All the bad events then use $\binom{s}{2}$ elements, and the total number of bad-events is $n^{O_s(1)}$, and we have $\log \sum_{B \in \mathcal B} \mu(B) = O_s(\log n)$. Hence by Theorem~\ref{parallel-thm}, the Parallel Moser-Tardos terminates in time $s^{O(1)} \log^{O(1) } n$ whp.
\end{proof}

\section{Acknowledgments}
Thanks to Aravind Srinivasan for many helpful comments and suggestions, as well as suggesting the algorithm for vertex-capacitated maximal edge covering. Thanks to the anonymous referees for their suggestions.

\end{document}